\documentclass[a4paper,11pt,draft]{article}
\usepackage{authblk}

\usepackage{lmodern}
\usepackage[T1]{fontenc}
\usepackage[utf8]{inputenc}
\usepackage{amsmath,amssymb,amsthm}
\usepackage{mathtools}
\usepackage[english]{babel}
\usepackage{csquotes}
\usepackage{tikz}
\usepackage{hyperref}
\usepackage[margin=1in]{geometry}
\usepackage[capitalize]{cleveref}

\hypersetup{
  colorlinks=true,
  draft=false
}

\newtheorem{theorem}{Theorem}
\newtheorem{lemma}[theorem]{Lemma}
\newtheorem{corollary}[theorem]{Corollary}
\newtheorem{claim}[theorem]{Claim}
\newtheorem{proposition}[theorem]{Proposition}
\theoremstyle{definition}
\newtheorem{definition}[theorem]{Definition}

\theoremstyle{remark}

\newenvironment{claimproof}[1][\proofname]{\begin{proof}[#1]}{\end{proof}}

\DeclareMathOperator{\dist}{dist}
\DeclareMathOperator{\supp}{supp}
\DeclareMathOperator{\dom}{dom}
\DeclareMathOperator{\Span}{span}

\DeclareMathOperator{\Tr}{Tr}

\DeclareMathOperator{\pre}{Pre}
\DeclareMathOperator{\post}{Post}
\DeclareMathOperator{\bias}{bias}
\DeclareMathOperator{\E}{\mathbb{E}}

\DeclareMathOperator{\ac0}{\mathsf{AC}^0}
\DeclareMathOperator{\dnf}{\mathsf{DNF}}
\DeclareMathOperator{\Mod}{\mathsf{Mod}}
\DeclareMathOperator{\reslin}{\mathsf{Res}[\oplus]}

\newcommand{\dual}[1]{{#1}^{*}}
\newcommand{\orth}[1]{{#1}^{\perp}}
\newcommand{\card}[1]{{\left|#1\right|}}

\newcommand{\iso}[1]{\mathbf{#1}}



\newcommand{\Q}{{\{0,1\}}}
\newcommand{\bbF}{\mathbb{F}}
\newcommand{\calP}{\mathcal{P}}
\newcommand{\restr}[2]{\left.#1\right|_{#2}}

\newcommand{\sreadonce}{strongly read-once}
\newcommand{\Sreadonce}{Strongly read-once}

\newcommand{\wreadonce}{weakly read-once}
\newcommand{\Wreadonce}{Weakly read-once}

\newcommand{\readonce}{read-once}
\newcommand{\Readonce}{Read-once}

\title{Linear Branching Programs and Directional Affine Extractors}

\author[1]{Svyatoslav Gryaznov \footnote{{\tt svyatoslav.i.gryaznov@gmail.com}. Supported by GA{\v C}R grant 19-05497S.}}
\author[2]{Pavel Pudl\'{a}k\footnote{{\tt pudlak@math.cas.cz}. Supported by GA{\v C}R grant 19-27871X.}}
\author[2]{Navid Talebanfard\footnote{{\tt talebanfard@math.cas.cz}. Supported by GA{\v C}R grant 19-27871X.}}

\affil[1]{St. Petersburg Department of V. A. Steklov Institute of Mathematics of the Russian Academy of Sciences}
\affil[2]{Institute of Mathematics, Czech Academy of Sciences}

\date{}

\begin{document}
\maketitle

\begin{abstract}

A natural model of \emph{\readonce{} linear branching programs} is a branching program where queries are $\bbF_2$ linear forms, and along each path, the queries are linearly independent. We consider two restrictions of this model, which we call \emph{weakly} and \emph{strongly} read-once, both generalizing standard read-once branching programs and parity decision trees. Our main results are as follows.

\begin{itemize}
    \item {\bf Average-case complexity.} We define a pseudo-random class of functions which we call \emph{directional affine extractors}, and show that these functions are hard on average for the \sreadonce{} model. We then present an explicit construction of such function with good parameters. This strengthens the result of Cohen and Shinkar (ITCS'16) who gave such average-case hardness for parity decision trees. Directional affine extractors are stronger than the more familiar class of affine extractors. Given the significance of these functions, we expect that our new class of functions might be of independent interest.

    \item {\bf Proof complexity.} We also consider the proof system $\reslin$ which is an extension of resolution with linear queries. A refutation of a CNF in this proof system naturally defines a linear branching program solving the corresponding search problem. Conversely, we show that a \wreadonce{} linear BP solving the search problem can be converted to a $\reslin$ refutation with constant blow up.
\end{itemize}

\end{abstract}

\section{Introduction}

Circuit complexity and proof complexity are two major lines of inquiry in complexity theory (see \cite{DBLP:books/daglib/0028687, kraj_2019} for extensive introductions). The former theme attempts to identify explicit Boolean functions which are not computable by small circuits from a certain restricted class, and the latter aims to find tautologies which are not provable by short proofs in a given restricted proof system. These seemingly unrelated topics are bound together in at least two different ways: via feasible interpolation where a circuit lower bound for a concrete computational problem implies proof size lower bounds (see, e.g., \cite{DBLP:conf/focs/HrubesP17}), and more fundamentally many proof systems have an underlying circuit class where proof lines come from. Notable examples are Frege, bounded depth Frege, and extended Frege systems where proof lines are De Morgan formulas, $\ac0$ circuits, and general Boolean circuits, respectively. Intuitively we expect that understanding a circuit class in terms of lower bounds and techniques should yield results in the proof complexity counterpart. This intuition has been supported by bounded depth Frege lower bounds using specialized Switching Lemmas (see, e.g., \cite{DBLP:journals/jacm/Hastad21}), the essential ingredient of $\ac0$ lower bounds.

\bigskip
\noindent{\bf $\ac0[2]$ circuits and $\reslin$ proof system.} It is not clear if this intuition should always hold. Lower bounds for $\ac0[p]$ circuits ($\ac0$ circuits with $\Mod_p$ gates) have been known for a long time \cite{MR897705,DBLP:conf/stoc/Smolensky87} yet lower bounds for bounded depth Frege systems with modular gates still elude us. Perhaps this failure is not too surprising since our understanding of $\ac0[p]$ circuits is not of the same status as our understanding of $\ac0$. For example, even for $\ac0[2]$, that is $\ac0$ with parity gates, no strong average-case lower bound is known. Settling such bounds is an important challenge, since Shaltiel and Viola \cite{DBLP:journals/siamcomp/ShaltielV10} showed that for standard worst-case to average-case hardness amplification techniques to work, the circuit class is required to compute the majority function, which is not the case for $\ac0[2]$. Several works have highlighted the special case of $\ac0\circ \Mod_2$, where the parity gates are next to the input \cite{DBLP:journals/eccc/ServedioV12,DBLP:conf/innovations/AkaviaBGKR14,DBLP:conf/innovations/CohenS16}. Among these works we pay special attention to the result of Cohen and Shinkar \cite{DBLP:conf/innovations/CohenS16} who considered the depth-3 case of this problem and proved a strong average-case hardness for the special case of parity decision trees. The more general case of $\dnf \circ \Mod_2$ remains open.

In the proof complexity parallel, a special case of $\ac0[2]$-Frege was suggested by Itsykson and Sokolov \cite{DBLP:journals/apal/ItsyksonS20}. They considered the system $\reslin$ that is an extension of resolution which reasons about disjunctions of linear equations over $\bbF_2$, which we call linear clauses. The rules of this system are:
\begin{itemize}
    \item \emph{the weakening rule}: from a linear clause we can derive any other linear clause which is semantically implied,
    \item \emph{the resolution rule}: for every two linear clauses $C$ and $D$ and linear form $f$, we can derive $C \vee D$ from $(f = 0) \vee C$ and $(f= 1) \vee D$.
\end{itemize}
They proved exponential lower bounds for the tree-like restriction of this system. These lower bounds were later extended in \cite{DBLP:conf/csr/Gryaznov19,DBLP:journals/cc/PartT21}. For DAG-like proofs, the only known results are due to Khaniki \cite{DBLP:journals/eccc/Khaniki20} who proved almost quadratic lower bounds, and to Lauria \cite{DBLP:journals/ipl/Lauria18a} for a restriction of the system when parities are on a bounded number of variables. Super-polynomial lower bounds for unrestricted DAG-like $\reslin$ are widely open.

\bigskip
\noindent{\bf Parity decision trees and tree-like $\reslin$.} Given an unsatisfiable CNF $F = C_1 \wedge \ldots \wedge C_m$, the \emph{search problem for $F$} is the computational problem of finding a clause $C_i$ falsified by a given assignment to the variables. A tree-like $\reslin$ refutation of $F$ can be viewed as a parity decision tree solving the search problem for an unsatisfiable CNF \cite{DBLP:conf/csr/Gryaznov19}. Recall that the strongest average-case lower bounds for $\ac0[2]$ are in fact for parity decision trees. Thus it seems that parity decision trees are at the frontier of our understanding in these two areas. Therefore a natural approach to make progress towards both general $\reslin$ lower bounds and average-case hardness for $\ac0[2]$ is to consider DAG-like structures more general than decision trees.

\subsection{Our contributions}

Motivated by strengthening tree-like $\reslin$ lower bounds as well as average-case lower bounds for parity decision trees to more general models, we consider a model of read-once branching programs (BPs) with linear queries. The most natural way to interpret the property of being read-once in BPs with linear queries, is to impose that along every path, the queries are linearly independent. We consider two restrictions of this model which we call \wreadonce{} and \sreadonce{}, both of which extend parity decision trees as well as standard read-once branching programs.



For \sreadonce{} BPs, we prove average-case hardness for a new class of psuedo-random functions, and we give an explicit construction of such a function, thus strengthening the result of Cohen and Shinkar \cite{DBLP:conf/innovations/CohenS16} and making progress towards average-case hardness for $\dnf \circ \Mod_2$. Our pseudo-random functions are defined below and might be of independent interest.

\bigskip
\noindent{\bf Directional affine extractors.} The average-case hardness result of Cohen and Shinkar \cite{DBLP:conf/innovations/CohenS16} is for affine extractors. An affine extractor for dimension $d$ and bias $\epsilon$ is a function such that restricted to any affine subspace of dimension at least $d$ it has bias at most $\epsilon$. Explicit constructions for such functions are known (e.g., \cite{MR2306652,DBLP:journals/combinatorica/Yehudayoff11,DBLP:journals/siamcomp/Ben-SassonK12}). For our purposes it is not clear if affine extractors are sufficient. Therefore we consider a more robust concept. We say that a function $f\colon \Q^n \rightarrow \Q^n$ is a \emph{directional affine extractor} for dimension $d$ with bias $\epsilon$, if for every non-zero $a \in \Q^n$, the derivative of $f$ in the direction $a$, $D_a f(x) = f(x+a) + f(x)$, is an affine extractor for dimension $d$ with bias $\epsilon$. We give an explicit construction of a good directional affine extractor for dimension larger than $2n/3$.

For \wreadonce{} BPs we show a correspondence with $\reslin$. More precisely, we show that a \wreadonce{} BP solving the search problem for a CNF $F$, can be converted to a $\reslin$ refutation of $F$. This also justifies considering a $\reslin$ counterpart to regular resolution. Recall that in a regular resolution proof, no variable is resolved more than once along any path. It is well-known that a read-once BP solving the search problem for an unsatisfiable CNF can be converted to a regular resolution refutation of the formula. Our result should be interpreted as an extension of this result to $\reslin$.



\subsection{\Readonce{} linear branching programs}

The model of read-once branching programs is a natural and extensively studied model of computation for which strong lower bounds are known~\cite{DBLP:journals/tcs/SavickyZ00,DBLP:conf/icalp/AndreevBCR99}. Here we consider an extension of this model where queries are linear forms. A linear branching program $\calP$ in the variables $x$ is a DAG with the following properties:
\begin{itemize}
  \item it has exactly one source;
  \item it has two sinks labeled with $0$ and $1$ representing the values of the function that $\calP$ computes;
  \item every inner node is labeled by a linear form $q$ over $\bbF_2$ in $x$ which we call \emph{queries};
  \item every inner node with a label $q$ has two outgoing edges labeled with $0$ and $1$ representing the value of $q$.
\end{itemize}

Any assignment to the input variables naturally defines a path in the program. We say that $\calP$ computes a Boolean function $f\colon \Q^n \to \Q$ if for every $x \in \Q^n$, the path in $\calP$ defined by $x$ ends in the sink labeled with $f(x)$.

We now define \readonce{} linear BPs. Given an inner node $v$ of a linear branching program $\calP$, we define $\pre_v$ as the span of all queries that appear on any path from the source of $\calP$ to $v$, excluding the query at $v$. We define $\post_v$ as the span of all queries in the subprogram starting at $v$. 

\begin{definition}[\Wreadonce{} linear branching program]
We say that a linear branching program $\calP$ is \emph{\wreadonce} if for every inner node $v$ of $\calP$ which queries $q$, it holds that $q \not \in \pre_v$.
\end{definition}

We can make this requirement more strict.

\begin{definition}[\Sreadonce{} linear branching program]
We say that a linear branching program $\calP$ is \emph{\sreadonce} if for every inner node $v$ of $\calP$, it holds that $\pre_v \cap \post_v = \{0\}$.
\end{definition}

It follows from both definitions that queries alongside any path in weakly or strongly read-once BP are linearly independent. Furthermore, both of these models generalize standard read-once BPs and parity decision trees.

When the distinction between weakly or strongly read-once is not important, we simply write ``read-once''.

\section{Notation and basic facts}

Each path in a \readonce{} program defines an affine subspace given by the set of solutions of the system corresponding to the queries on the path. Any affine subspace can be represented by a vector space shifted by a vector from the affine space. For our purposes, we need to choose this shift carefully. 

Let $p$ be a path in $\calP$ leading to a node $v$ with queries $q_1, \ldots, q_k$ and answers $a_1, \ldots, a_k$ to these queries which define the affine subspace $S_p = \{x:\bigwedge_{i=1}^k q_i(x) = a_i\}$. Let $V_p$ be the supporting vector space of $S_p$, i.e., $V_p = \{x : \bigwedge_{i = 1}^k q_i(x) = 0\}$. Then clearly $S_p = V_p + b$ for any $b \in S_p$. Choose an arbitrary basis $q'_1, \ldots, q'_t$ for $\post_v$. Since $q'_1, \ldots, q'_t$ are independent of $q_1, \ldots, q_k$, there exists $b$ such that $\bigwedge_{i=1}^k q_i(b) = a_i$ and $\bigwedge_{i = 1}^t q'_i(b) = 0$. Then $S_p = V_p + b$ and for every $q \in \post_v$, we have $q(b) = 0$.

\begin{definition}[Canonical affine subspace] 
Given a path $p$ which ends at a node $v$, we call $S_p$ the \emph{canonical affine subspace} for $p$. Furthermore a \emph{canonical representation} of $S_p$ is any $V_p + b = S_p$ where every $q \in \post_v$ vanishes on $b$.
\end{definition}

Throughout the paper we drop the word \emph{representation} and simply say $V_p + b$ is the canonical affine subspace of $p$
 to mean that it is a canonical representation of $S_p$.

Since we will often use canonical affine subspaces to represent paths in BPs, we adopt the following algebraic notation.
Let us denote the space of all linear forms on $\bbF_2^n$ (the dual space) as $\dual{(\bbF_2^n)}$.
Given a subspace $V$ of $\bbF_2^n$, we define $\orth{V}$ as the space of all linear forms from $\dual{(\bbF_2^n)}$ that vanish on $V$ (this space is sometimes called the annihilator of $V$), i.e.,
\[
  \orth{V} = \{\ell \in \dual{(\bbF_2^n)} : \forall v \in V,\ \ell(v) = 0\}.
\]
Given a path $p$ with queries $q_1, \ldots, q_k$ and its canonical affine subspace $V+b$, the space $\orth{V}$ is the query space of $p$, i.e., $\orth{V} = \Span(q_1, \ldots, q_k)$.

Throughout the paper we adopt the following notation.
\begin{itemize}
  \item Given a vector $c \in \Q^n$ the \emph{support of $c$} is defined as
  \[
    \supp(c) \coloneqq \{i : c_i \neq 0\}.
  \]

  \item Let $\sigma$ be a partial assignment to the variables $x_1, \ldots, x_n$. Then
  \[
    \dom(\sigma) \coloneqq \{i : \sigma(x_i)\ \text{is defined}\}.
  \]

  \item We say that $a \in \Q^n$ is \emph{consistent} with a partial assignment $\sigma$ to $x_1, \ldots, x_n$ if for every $i \in \dom(\sigma)$, it holds that $\sigma(x_i) = a_i$.
  
  \item Let $V$ and $W$ be two subspaces. Then the sum of $V$ and $W$ is the subspace
  \[
    V+W \coloneqq \{v+w : v \in V,\ w \in W\}.
  \]
  Note that $V + W = \Span(V \cup W)$.

  \item We write $a+b$ without specifying the underlying field, if it is clear from the context and often intended to be $\bbF_2$.
\end{itemize}

\subsection{The trace map}

The trace map $\Tr\colon \bbF_{p^n} \to \bbF_p$ is defined as
\[
  \Tr(x) \coloneqq \sum_{i=0}^{n-1} x^{p^i}.
\]
One important property that we need is that $\Tr$ is a linear map. We also use the following fact about the trace.
\begin{proposition}[cf.~\cite{lidl_niederreiter_1996}]\label{prop:trace_linear}
  For every $\bbF_p$-linear map $\pi\colon \bbF_{p^n} \to \bbF_p$ there exists $\mu \in \bbF_{p^n}$ such that for all $x \in \bbF_{p^n}$ we have
  \[
    \pi(x) = \Tr(\mu \cdot x).
  \]
  Furthermore, $\pi$ is trivial if and only if $\mu = 0$.
\end{proposition}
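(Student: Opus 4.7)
The plan is to exhibit a natural $\bbF_p$-linear map $\Phi\colon \bbF_{p^n} \to \dual{(\bbF_{p^n})}$ sending $\mu$ to the linear functional $T_\mu(x) \coloneqq \Tr(\mu \cdot x)$, and then show that $\Phi$ is a bijection. Since the target space $\dual{(\bbF_{p^n})}$ of $\bbF_p$-linear functionals has $\bbF_p$-dimension $n$, the same as the source, it suffices to prove that $\Phi$ is injective. Both the statement about existence of $\mu$ and the ``furthermore'' clause about triviality then follow, because a bijection sends $0$ to $0$ and only $0$ to $0$.

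The first step is purely formal: $\Phi$ is $\bbF_p$-linear because $\Tr$ is $\bbF_p$-linear and multiplication in $\bbF_{p^n}$ distributes over addition, so $T_{\mu_1 + \mu_2} = T_{\mu_1} + T_{\mu_2}$ and $T_{c\mu} = c \cdot T_\mu$ for $c \in \bbF_p$. For each $\mu \in \bbF_{p^n}$, the map $T_\mu$ is $\bbF_p$-linear in $x$ for the same reason, so $\Phi$ indeed lands in $\dual{(\bbF_{p^n})}$.

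The crux of the proof is injectivity of $\Phi$. Suppose $T_\mu \equiv 0$, that is, $\Tr(\mu x) = 0$ for every $x \in \bbF_{p^n}$. If $\mu \neq 0$, then the map $x \mapsto \mu x$ is a bijection of $\bbF_{p^n}$, so this assumption forces $\Tr(y) = 0$ for every $y \in \bbF_{p^n}$. But $\Tr(y) = y + y^p + \cdots + y^{p^{n-1}}$ is a polynomial of degree $p^{n-1}$, which is strictly less than $\card{\bbF_{p^n}} = p^n$; a nonzero polynomial over a field cannot have more roots than its degree, so $\Tr$ being identically zero as a function would contradict the fact that the polynomial itself is nonzero. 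Hence $\mu = 0$, establishing injectivity.

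The only real obstacle is the final step — ruling out that $\Tr$ vanishes identically — which is handled by the degree-versus-cardinality argument above. Everything else is linear-algebraic bookkeeping. Combining injectivity with the dimension count gives surjectivity and thus the existence of a unique $\mu$ representing any given $\pi$, and the ``furthermore'' statement is precisely the injectivity rephrased.
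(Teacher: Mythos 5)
The paper does not prove this proposition; it is cited as a classical fact from Lidl and Niederreiter (their Theorem 2.24 and surrounding material). Your argument is precisely the standard textbook one and is correct: you exhibit the $\bbF_p$-linear map $\Phi\colon \mu \mapsto \Tr(\mu\,\cdot\,)$ into $\dual{(\bbF_{p^n})}$, note both sides have $\bbF_p$-dimension $n$, and reduce everything to injectivity of $\Phi$, i.e., nondegeneracy of the trace pairing. The only nontrivial point is that $\Tr$ is not identically zero, and your degree-versus-cardinality argument handles it cleanly: the polynomial $Y + Y^p + \cdots + Y^{p^{n-1}}$ is nonzero of degree $p^{n-1} < p^n = \card{\bbF_{p^n}}$, so it cannot vanish on the whole field. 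The ``furthermore'' clause is just a restatement of injectivity together with $\Phi(0)=0$. No gaps.
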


Since, we are interested in Boolean functions, we will only consider the case $p=2$.
Let $\phi\colon \bbF_2^n \to \bbF_{2^n}$ be any $\bbF_2$-linear isomorphism. Then $\Tr(\mu \cdot \phi(x))$ is linear in $x$ and we have the following:
\begin{proposition}\label{prop:trace_linear_boolean}
  The set of all linear Boolean functions coincides with the set of functions $\ell_\mu(x) = \Tr(\mu \cdot \phi(x))$, where $\mu \in \bbF_{2^n}$.
\end{proposition}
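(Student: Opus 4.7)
The plan is to use Proposition~\ref{prop:trace_linear} together with the isomorphism $\phi$ to pull the trace characterisation from $\bbF_{2^n}$ back to $\bbF_2^n$, and then verify containment in both directions.

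First I would argue the easy inclusion: for any $\mu \in \bbF_{2^n}$, the function $\ell_\mu(x) = \Tr(\mu \cdot \phi(x))$ is a composition of three $\bbF_2$-linear maps, namely $\phi$, multiplication by $\mu$ (which is $\bbF_2$-linear on $\bbF_{2^n}$ viewed as an $\bbF_2$-vector space), and $\Tr$. Hence $\ell_\mu$ is an $\bbF_2$-linear Boolean function.

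For the reverse inclusion, let $L \colon \bbF_2^n \to \bbF_2$ be any linear Boolean function and consider $\pi \coloneqq L \circ \phi^{-1} \colon \bbF_{2^n} \to \bbF_2$. Since $\phi^{-1}$ and $L$ are both $\bbF_2$-linear, so is $\pi$. By Proposition~\ref{prop:trace_linear} applied with $p = 2$, there exists $\mu \in \bbF_{2^n}$ such that $\pi(y) = \Tr(\mu \cdot y)$ for every $y \in \bbF_{2^n}$. Substituting $y = \phi(x)$ gives $L(x) = \pi(\phi(x)) = \Tr(\mu \cdot \phi(x)) = \ell_\mu(x)$, as desired.

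There is no real obstacle here: the statement is essentially a transport of Proposition~\ref{prop:trace_linear} along the fixed linear isomorphism $\phi$. The only thing worth being careful about is distinguishing the two linear structures on $\bbF_{2^n}$ (its native $\bbF_{2^n}$-module structure is only used to make sense of the product $\mu \cdot \phi(x)$, whereas all maps involved are required to be linear merely over $\bbF_2$), which is automatic once one recalls that scalar multiplication by a fixed element of $\bbF_{2^n}$ is $\bbF_2$-linear on the underlying $\bbF_2$-space.
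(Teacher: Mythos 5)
Your proof is correct and takes essentially the route the paper intends (the paper states the proposition without a formal proof, having just noted that $\Tr(\mu\cdot\phi(x))$ is linear in $x$ and relying on \cref{prop:trace_linear} for the converse). Both directions are handled exactly as expected: the forward inclusion by composing $\bbF_2$-linear maps, and the reverse by conjugating by $\phi$ and invoking \cref{prop:trace_linear}.
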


In the rest of the paper we fix $\phi$. To make the proofs more readable we use bold font to denote the corresponding elements of $\bbF_{2^n}$, e.g., $\iso{x}$ for $\phi(x)$.

\subsection{Affine extractors and dispersers}
A Boolean function $f\colon \Q^n \to \Q$ is an \emph{affine disperser} for dimension $d$ if $f$ is not constant on any affine subspace of dimension at least $d$.
Let us also recall affine extractors, which are generalizations of affine dispersers.

The \emph{bias} of $f$ is defined as
\[
  \bias(f) \coloneqq \card{\E_{x \in U_n}[(-1)^{f(x)}]},
\]
where $U_n$ is a uniform distribution on $\Q^n$.
Given an affine subspace $f$, \emph{the bias of $f$ restricted to $S \subseteq \Q^n$} is defined as
\[
  \bias(\restr{f}{S}) \coloneqq \card{\E_{x \in U(S)}[(-1)^{f(x)}]},
\]
where $U(S)$ is a uniform distribution on $S$.

A Boolean function $f\colon \Q^n \to \Q$ is an \emph{affine extractor} for dimension $d$ with bias $\epsilon$ if for every affine subspace $S$ of dimension $d$, the bias of $f$ restricted to $S$, $\bias(\restr{f}{S})$, is at most $\epsilon$.

\section{Affine mixedness}

In this section we give a criterion for functions to be worst-case hard for \readonce{} linear BPs. Let us first recall \emph{mixedness} from standard read-once BPs.

\begin{definition}
  A Boolean function $f\colon \Q^n \to \Q$ is \emph{$d$-mixed} if for every $I \subseteq [n]$ of size at most $n-d$\footnote{This definition is commonly given for sets of size $d$ instead of $n - d$. We deviate from this since for our generalization to affine spaces, it corresponds to dimension which is more natural.} and every two distinct partial assignments $\sigma$ and $\tau$ with $\dom(\sigma) = \dom(\tau) = I$, it holds that $\restr{f}{\sigma} \neq \restr{f}{\tau}$.
\end{definition}

\begin{theorem}[Folklore; see~\cite{DBLP:books/daglib/0028687} for a proof]\label{thm:mixedness_1bp}
  Let $f\colon \Q^n \to \Q$ be a $d$-mixed Boolean function. Then any read-once branching program computing $f$ has size at least $2^{n-d}-1$.
\end{theorem}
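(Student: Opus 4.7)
The plan is a Myhill--Nerode-style argument. Let $\calP$ be a \readonce{} BP computing $f$ and fix any $I \subseteq [n]$ with $|I|=n-d$. To each of the $2^{n-d}$ partial assignments $\sigma$ with $\dom(\sigma) = I$, I would associate a node $v_\sigma$ of $\calP$ by \emph{simulating} $\sigma$ through $\calP$: start at the source and, as long as the current inner node queries some $x_i$ with $i \in I$, follow the edge labelled $\sigma(x_i)$; halt upon reaching either a sink or an inner node whose query variable lies outside $I$. The plan is then to show the map $\sigma \mapsto v_\sigma$ is injective, from which the bound of $2^{n-d}-1$ on the size follows once the sinks are accounted for.

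The sink cases are easy: if $v_\sigma$ and $v_\tau$ coincide at some sink $s$ for distinct $\sigma, \tau$, then every extension of $\sigma$ (resp.\ $\tau$) traces the simulation walk of $\sigma$ (resp.\ $\tau$) all the way to $s$ since the BP is deterministic, so $\restr{f}{\sigma}$ and $\restr{f}{\tau}$ are both the constant label of $s$, contradicting $d$-mixedness.

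The main step is the case $v_\sigma = v_\tau = v$ with $v$ an inner node. I would show that for every $z \in \Q^{[n] \setminus I}$ the full inputs $(\sigma, z)$ and $(\tau, z)$ reach the same sink in $\calP$, giving $\restr{f}{\sigma} = \restr{f}{\tau}$ and again contradicting $d$-mixedness. The \readonce{} property enters twice: (i) the path of $(\sigma, z)$ agrees with the simulation walk of $\sigma$ all the way to $v$, since the BP is deterministic and the two agree on $I$-queries; and (ii) the sub-BP rooted at $v$ cannot query any variable previously queried on a path to $v$, so its execution from $v$ depends only on $z$ together with any $I$-variable never touched on a path to $v$.

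The main obstacle will be precisely the edge case in (ii) where the sub-BP at $v$ does query some $x_i$ with $i \in I$ that was never read on a path to $v$: the continuations of $(\sigma, z)$ and $(\tau, z)$ can then diverge whenever $\sigma(x_i) \neq \tau(x_i)$. I would handle this either by strengthening the walk so that it advances past $v$ along any further $I$-queries accessible from $v$ (pushing $v_\sigma$ deeper into $\calP$), or by restricting attention to the subdomain $I' \subseteq I$ of $I$-variables actually queried on the paths in question and invoking $d$-mixedness on $I'$, which still applies since $|I'| \leq n-d$. Once injectivity is established, $\calP$ contains at least $2^{n-d}$ distinct nodes, and subtracting the sinks yields the claimed bound.
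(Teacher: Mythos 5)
Your map $\sigma \mapsto v_\sigma$ need not be injective, and neither of your proposed fixes repairs this. The edge case you flag is the real issue: if the sub-BP rooted at $v$ reads some $x_i$ with $i \in I$ that was never read on the simulation walk, then any two assignments $\sigma,\tau$ that agree on the $I$-variables read on the walk but differ at $x_i$ trace \emph{identical} walks, so $v_\sigma = v_\tau$ and injectivity fails outright (e.g.\ a program whose first $n-d-1$ levels read $x_1,\dots,x_{n-d-1}$ and defer $x_{n-d}$ to after a non-$I$ query collapses all $2^{n-d}$ assignments to at most $2^{n-d-1}$ nodes). Fix~A is not well defined: $v_\sigma$ queries a variable outside $\dom(\sigma)$, so the walk cannot be advanced past it deterministically from $\sigma$ alone. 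Fix~B fails for exactly the reason injectivity does: when $\sigma$ and $\tau$ agree on the subdomain $I'$ of $I$-variables read on the walk, applying $d$-mixedness on $I'$ compares two identical restrictions and yields no contradiction, while the collision $v_\sigma = v_\tau$ stands.

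The root cause is fixing $I$ up front. The folklore proof (and the paper's proof of its generalization to $d$-affine mixedness) instead shows that no two distinct source-paths of length at most $n-d-1$ can merge, and the set on which $d$-mixedness is invoked is chosen \emph{from the two paths}: if the read sets satisfy $I_p = I_q$, use $I_p$ with the two distinct partial assignments induced by $p$ and $q$; if some $i \in I_q \setminus I_p$, use $I_p \cup \{i\}$ (still of size at most $n-d$) with two assignments following $p$ on $I_p$ and differing at $x_i$. In both cases the read-once property guarantees the sub-program below the merge node reads none of the chosen variables, so two assignments with a differing restriction follow the same continuation --- a contradiction. This per-pair choice of $I$, which the definition of $d$-mixedness permits because it quantifies over all $|I| \le n-d$, is precisely what dissolves the ``unread $I$-variable'' obstacle that your fixed-$I$ formulation cannot avoid. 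It then follows that the first $n-d-1$ levels form a complete binary tree, giving the $2^{n-d}-1$ bound.
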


Explicit constructions of $d$-mixed functions with $d = o(n)$ and thus $2^{n - o(n)}$ size lower bounds for read-once BPs were given in \cite{DBLP:journals/tcs/SavickyZ00,DBLP:conf/icalp/AndreevBCR99}. We generalize this notion for linear branching programs. We need the following equivalent definition of $d$-mixedness.

\begin{lemma}\label{lm:equiv_d-mixed}
  A Boolean function $f$ is $d$-mixed if and only if for every partial assignments $\sigma$ of size at most $n-d$ and every $c \neq 0$ with $\supp(c) \subseteq \dom(\sigma)$, there exists $x$ consistent with $\sigma$ such that $f(x) \neq f(x+c)$.
\end{lemma}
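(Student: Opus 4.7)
The plan is to establish the equivalence by converting between the two natural data: on one side, a pair of distinct partial assignments $\sigma, \tau$ with the same domain $I$; on the other, a single partial assignment $\sigma$ together with a nonzero vector $c$ supported in $\dom(\sigma)$. The bridge is the XOR difference: given $\sigma$ and $c$, let $\tau$ agree with $\sigma$ on $\dom(\sigma)$ except that $\tau(x_i) = \sigma(x_i) \oplus c_i$; conversely, given $\sigma \neq \tau$ on common domain $I$, let $c_i = \sigma(x_i) \oplus \tau(x_i)$ for $i \in I$ and $c_i = 0$ otherwise. In both directions $c \neq 0$ iff $\sigma \neq \tau$, and $\supp(c) \subseteq I = \dom(\sigma)$.

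For the forward direction, I would assume $f$ is $d$-mixed, fix $\sigma$ and $c$ as in the lemma, and form $\tau$ by the XOR construction above. Since $|\dom(\sigma)| \le n-d$ and $\sigma \neq \tau$, $d$-mixedness gives an assignment $y$ to the free variables with $\restr{f}{\sigma}(y) \neq \restr{f}{\tau}(y)$. I would then stitch $\sigma$ and $y$ into a full assignment $x$; by construction $x$ is consistent with $\sigma$ and $x+c$ is consistent with $\tau$ (on $\dom(\sigma)$ because adding $c$ flips exactly the bits where $\sigma$ and $\tau$ disagree, and outside because $c$ is zero there), so $f(x) \neq f(x+c)$ as required.

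For the reverse direction, I would assume the $(x, x+c)$ condition, take any two distinct $\sigma, \tau$ on a common domain $I$ of size at most $n-d$, form the $c$ as above, and apply the hypothesis to obtain $x$ consistent with $\sigma$ with $f(x) \neq f(x+c)$. Letting $y$ denote the restriction of $x$ to $[n]\setminus I$, the same consistency check shows $\restr{f}{\sigma}(y) = f(x) \neq f(x+c) = \restr{f}{\tau}(y)$, hence $\restr{f}{\sigma} \neq \restr{f}{\tau}$.

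No step is a real obstacle; the proof is purely a matter of unpacking definitions. The only thing to be careful about is the bookkeeping that $x+c$ is consistent with $\tau$ on $\dom(\tau)$ and agrees with $x$ outside it, since this is what allows translating between the function-level inequality $f(x) \neq f(x+c)$ and the restriction-level inequality $\restr{f}{\sigma} \neq \restr{f}{\tau}$.
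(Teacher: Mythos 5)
Your proof is correct and follows essentially the same route as the paper's: both directions translate between $(\sigma,\tau)$ pairs and $(\sigma,c)$ pairs via the XOR relation $c_i = \sigma(x_i) + \tau(x_i)$, then stitch or restrict to move between full assignments and restricted functions. The bookkeeping you flag (that $x+c$ is consistent with $\tau$ and agrees with $x$ off $\dom(\sigma)$) is exactly the observation the paper uses as well.
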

\begin{proof}
  ($\Leftarrow$) Let $\sigma$ and $\tau$ be two distinct partial assignments with domain $I$ of size at most $n - d$. Define $c_i = \tau(x_i) + \sigma(x_i)$ for $i \in I$ and $c_i = 0$ otherwise. By assumption there exists $x$ consistent with $\sigma$ such that $f(x) \ne f(x + c)$. It follows from the definition of $c$ that $x+c$ is consistent with $\tau$. Define $J = [n] \setminus I$ and $z = x_{J} = {(x+c)}_J$. Then $\restr{f}{\sigma}(z) = f(x) \neq f(x+c) = \restr{f}{\tau}(z)$.

  ($\Rightarrow$) Let $\sigma$ be a partial assignment with a domain of size at most $n - d$ and let $c$ be given such that $\supp(c) \subseteq \dom(\sigma)$. Define $\tau(x_i) = \sigma(x_i) + c_i$ for $i \in \dom(\sigma)$. By assumption $\restr{f}{\sigma} \neq \restr{f}{\tau}$, hence there exists $z$ such that $\restr{f}{\sigma}(z) \neq \restr{f}{\tau}(z)$. Define $x$ to take the same value as $\sigma$ on $\dom(\sigma)$ and equal to $z$ otherwise. Then $f(x) = \restr{f}{\sigma}(z) \neq \restr{f}{\tau}(z) = f(x+c)$.
\end{proof}

\begin{definition}
  A Boolean function $f\colon \Q^n \to \Q$ is \emph{$d$-affine mixed} if for every affine subspace $S$ of dimension at least $d$ and every vector $c \not\in V$, where $V$ is the supporting vector space of $S$, there exists $x \in S$ such that $f(x) \neq f(x+c)$.
\end{definition}

It follows from \cref{lm:equiv_d-mixed} that $d$-affine mixedness implies $d$-mixedness since a partial assignment is a special case of an affine subspace.

Now we are ready to prove a generalization of \cref{thm:mixedness_1bp}.

\begin{theorem}
  Let $f\colon \Q^n \to \Q$ be a $d$-affine mixed Boolean function. Then any \sreadonce{} linear branching program computing $f$ has size at least $2^{n-d}-1$.
\end{theorem}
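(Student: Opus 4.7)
The plan is to mirror the classical proof of \cref{thm:mixedness_1bp} for standard read-once branching programs, replacing partial assignments with canonical affine subspaces and using $d$-affine mixedness in place of $d$-mixedness. First I would note that $d$-affine mixedness precludes $f$ from being constant on any affine subspace of dimension strictly greater than $d$: given such an $S$ with $\dim S > d$ and any $d$-dimensional subspace $S' \subset S$, picking a direction $c \in V_S \setminus V_{S'}$ would make $f|_{S'} = f|_{S' + c}$ if $f$ were constant on $S$, contradicting $d$-affine mixedness. Consequently, no source-to-sink path in $\calP$ can have fewer than $n-d$ queries, since the canonical affine subspace of such a short path would have dimension larger than $d$, yet the sub-BP at the sink would force $f$ to be constant on it.

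Next, for each input $x \in \Q^n$, I would let $p_x$ be the prefix of the computation path $P(x)$ consisting of its first $n-d$ queries and $v(x)$ its endpoint; the canonical subspace $S_{p_x}$ then has dimension $d$. Because each input follows a unique computation path, the canonical subspaces arising from distinct length-$(n-d)$ paths are pairwise disjoint and hence pairwise distinct as sets. There are exactly $2^{n-d}$ such paths, partitioning $\Q^n$ into blocks of size $2^d$. The heart of the argument will be to show that the assignment of length-$(n-d)$ paths to endpoint nodes is essentially injective, so that the number of distinct endpoints is at least $2^{n-d} - 1$, with the $-1$ absorbing at most one collision with a sink.

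To establish injectivity, I would suppose two such paths $p_1, p_2$ both end at the same node $v$ with canonical representations $V_{p_i} + b_{p_i}$. By the canonical choice of offsets, each $b_{p_i} \in \post_v^\perp$, so $c := b_{p_1} + b_{p_2}$ lies in $\post_v^\perp$. Since the sub-function $T_v$ of the sub-BP at $v$ depends only on queries in $\post_v$, one has $T_v(y + c) = T_v(y)$ for every $y$. Now for any $y \in S_{p_1}$ whose shift $y + c$ still reaches $v$ along the BP computation, we obtain $f(y) = T_v(y) = T_v(y+c) = f(y+c)$. Combining this invariance on a suitable $d$-dimensional affine subspace with $d$-affine mixedness in direction $c$ yields the desired contradiction, forcing $p_1 = p_2$. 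The hardest part will be the case $V_{p_1} \ne V_{p_2}$, where $S_{p_1} + c \ne S_{p_2}$ in general and $y + c$ need not stay in $v^{-1}$; here the strong read-once hypothesis $\pre_v \cap \post_v = \{0\}$ is essential, providing enough linear-algebraic room to exhibit a $d$-dimensional affine subspace on which translation by $c$ preserves routing (through $v$ or another node with the same sub-function), thereby violating $d$-affine mixedness.
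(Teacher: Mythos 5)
Your overall strategy---mirroring the classical mixedness argument, replacing partial assignments with canonical affine subspaces, and bounding the program size by showing that the map from length-$(n-d)$ path prefixes to their endpoints is injective---is sound and in the same spirit as the paper's proof (which phrases this as: the program must start with a complete binary tree of depth $n-d-1$). Your handling of the easy case $V_{p_1} = V_{p_2}$ is also correct: there $c = b_{p_1}+b_{p_2} \notin V_{p_1}$, $S_{p_1} + c = S_{p_2}$, both sides reach $v$, and since $c$ annihilates $\post_v$ the sub-program at $v$ cannot separate $y$ from $y+c$, contradicting $d$-affine mixedness.

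The gap is precisely in the case $V_{p_1} \ne V_{p_2}$, which you flag as ``the hardest part'' but do not resolve, and the direction you gesture at does not work. If $V_{p_1} \ne V_{p_2}$ then $S_{p_1} + c = V_{p_1} + b_{p_2}$ is a ``mixed'' coset that is neither $S_{p_1}$ nor $S_{p_2}$, so for a generic $y \in S_{p_1}$ the point $y+c$ simply does not reach $v$, and $f(y+c) = T_v(y+c)$ fails. Strong read-onceness by itself does not give you a $d$-dimensional affine subspace on which translation by this particular $c$ preserves routing; no amount of ``linear-algebraic room'' rescues a translation whose image leaves the fibre of $v$. The paper circumvents this by \emph{not} translating between the two canonical cosets at all in this case. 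Instead it first rules the case out: assuming $\orth{V_{p_1}} \ne \orth{V_{p_2}}$, pick $\ell \in \orth{V_{p_2}} \setminus \orth{V_{p_1}}$. Since $\ell$ lies in the query space of $p_2$ it lies in $\pre_v$, and strong read-onceness ($\pre_v \cap \post_v = \{0\}$) forces $\ell \notin \post_v$. Now split $S_{p_1}$ along $\ell$ into two cosets $V'+a_1$ and $V'+a_2$ of a common $V'$ of dimension $\ge d$, \emph{choosing the shifts $a_1, a_2$ to vanish under every form in $\post_v$} (possible again by $\pre_v \cap \post_v = \{0\}$). The translation direction used with $d$-affine mixedness is then $a_1 + a_2 \notin V'$, and both halves lie inside $S_{p_1}$, so both sides of every pair $z+a_1$, $z+a_2$ reach $v$; yet $\post_v$ cannot distinguish them. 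This contradiction shows $\orth{V_{p_1}} = \orth{V_{p_2}}$, i.e.\ $V_{p_1} = V_{p_2}$, reducing everything to the easy case you already handled. In short, the missing idea is to choose the translation direction from \emph{inside a single} canonical coset using a query of the \emph{other} path, rather than attempting to translate one canonical coset onto the other.

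A minor additional point: once injectivity is established you should count nodes a bit more carefully. The cleanest route (as in the paper) is to observe that all nodes at depth $\le n-d-1$ form a complete binary tree, giving $2^{n-d}-1$ inner nodes directly, instead of counting endpoints of length-$(n-d)$ prefixes and then arguing about collisions with sinks.
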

\begin{proof}
  We prove that any such program $\calP$ computing $f$ starts with a complete binary tree of depth $n-d-1$. Assume for the sake of contradiction that there are two paths $p$ and $q$ of length at most $n-d-1$, which meet for the first time at a node $v$. Let $V+a$ and $W+b$ be their corresponding canonical affine subspaces. Both of them have dimension at least $d+1$.

  We start by proving $\orth{V} = \orth{W}$ which implies $V = W$. Suppose that it is not the case. Without loss of generality, there exists $\ell \in \orth{W} \setminus \orth{V}$. By the \readonce{} property $\ell \not\in \post_v$.

  Consider two affine subspaces $V'+a_1$ and $V'+a_2$ obtained by intersecting $V+a$ with $\ell(x) = 0$ and $\ell(x)=1$ such that for every $\ell' \in \post_v$, $\ell'(a_1) = 0$ and $\ell'(a_2) = 0$ (recall that we can choose such $a_1$ and $a_2$ since $\pre_v \cap \post_v = \{0\}$). By construction, they have dimension at least $d$.
  Since $f$ is $d$-affine mixed, there exists $z \in V'$ such that $f(z+a_1) \neq f(z+a_2)$.
  Consider any query $\ell'$ in the subprogram starting at $v$. The fact that $\ell' \in \post_v$ implies $\ell'(a_1) = \ell'(a_2) = 0$. Thus, we have $\ell'(z+a_1) = \ell'(z) = \ell'(z+a_2)$. It implies that in the subprogram starting at $v$ both $z+a_1$ and $z+a_2$ must follow the same path contradicting $f(z+a_1) \neq f(z+a_2)$.

  Now, since $V = W$, $V+b$ is the canonical affine subspace for $q$, and $a \ne b$ since $p$ and $q$ are different paths.
  Again, since $f$ is $d$-affine mixed, there exists $z \in V$ such that $f(z+a) \neq f(z+b)$. Analogously to the previous case, for every $\ell' \in \post_v$ we have $\ell'(a)=\ell'(b)=0$, and thus $\ell'(z+a) = \ell'(z+b)$ contradicting $f(z+a) \neq f(z+b)$.
\end{proof}

\section{Affine dispersers for directional derivatives}

In this section we give an explicit construction of an affine mixed function for linear dimension. In fact, we give an even more powerful construction, which allows us to get an average-case lower bound for \sreadonce{} linear branching programs.

For a Boolean function $f$ its directional derivative with respect to a non-zero vector $a$ is defined as
\[
  D_{a} f(x) \coloneqq f(x+a) + f(x).
\]

\begin{definition}
  A Boolean function $f\colon \Q^n \to \Q$ is a \emph{directional affine extractor} for dimension $d$ with bias $\epsilon$ if for every non-zero $a$, the derivative $D_a f$ is an affine extractor for dimension $d$ with bias $\epsilon$.

  Similarly, $f$ is a \emph{directional affine disperser} for dimension $d$ if for every non-zero $a$, $D_a f$ is an affine disperser for dimension $d$.
\end{definition}

Observe that this notion is stronger than the one defined in the previous section: if $f$ is a directional affine disperser for dimension $d$, then it is $d$-affine mixed. 



In what follows we construct a Boolean function $f$ in $n$ variables that is a good directional affine extractor for dimensions bigger than $\frac{2}{3}n$.


It is a well-known fact that the inner product function is an affine extractor. IP is a member of the class of bent functions, which are all affine extractors. A Boolean function $f\colon \Q^n \to \Q$ is called a \emph{bent function} if all Fourier coefficients of its $\pm 1$ representation $f_\pm(x) \coloneqq {(-1)}^{f(x)}$ have the same absolute value.

\begin{lemma}[Folklore; for a proof see, e.g.,~\cite{DBLP:conf/innovations/CohenS16,DBLP:journals/jcss/CheraghchiGJWX18}]
\label{lm:bent}
  Let $f$ be a bent function on $n$ variables and $c \ge 1$ be an integer. Then, $f$ is an affine extractor for dimension $k = n/2 + c$ with bias at most $2^{-c}$. In particular, $f$ is an affine disperser for dimension $n/2 + 1$.
\end{lemma}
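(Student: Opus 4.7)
The plan is to use the Fourier characterization of bent functions together with the standard Fourier-analytic formula for the bias of a $\pm 1$ function on an affine subspace.

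First I would recall that for a bent function $f$ on $n$ variables, Parseval's identity $\sum_S \hat{f}_\pm(S)^2 = 1$ combined with the defining property that all $|\hat{f}_\pm(S)|$ are equal forces $|\hat{f}_\pm(S)| = 2^{-n/2}$ for every $S \subseteq [n]$ (in particular, bent functions only exist for even $n$, which is the setting where the lemma is interesting). I would state this at the outset since it is the only fact about bentness that the proof uses.

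Next, fix an affine subspace $S = V + a$ of dimension $k = n/2 + c$. Expand $(-1)^{f(x)}$ in its Fourier series and compute:
\[
  \E_{x \in U(S)}\left[(-1)^{f(x)}\right]
  = \frac{1}{|V|} \sum_{v \in V} \sum_{T} \hat{f}_\pm(T) (-1)^{T \cdot (v+a)}
  = \sum_{T} \hat{f}_\pm(T) (-1)^{T \cdot a} \cdot \frac{1}{|V|}\sum_{v \in V} (-1)^{T \cdot v}.
\]
The inner sum over $v \in V$ equals $1$ if $T \in \orth{V}$ and $0$ otherwise, collapsing the expression to $\sum_{T \in \orth{V}} \hat{f}_\pm(T)(-1)^{T \cdot a}$.

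Now apply the triangle inequality and the bent bound on Fourier coefficients: since $|\orth{V}| = 2^{n-k}$ and each $|\hat{f}_\pm(T)| = 2^{-n/2}$,
\[
  \bias(\restr{f}{S}) \le 2^{n-k} \cdot 2^{-n/2} = 2^{n/2 - k} = 2^{-c}.
\]
This gives the extractor claim. The disperser statement follows immediately by taking $c = 1$: the bias is at most $1/2 < 1$, so $f$ is non-constant on every affine subspace of dimension at least $n/2 + 1$.

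I do not expect a serious obstacle here; the only subtlety is being careful that the Fourier-coefficient absolute value is exactly $2^{-n/2}$ (which is forced by Parseval together with equal magnitudes) and remembering that the argument works on any affine shift $a$ because the shift only contributes a unimodular phase $(-1)^{T \cdot a}$ that is absorbed by the triangle inequality.
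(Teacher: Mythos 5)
Your proof is correct, and it is precisely the standard Fourier-analytic argument that appears in the cited references (the paper itself defers to those references and does not include its own proof of this folklore lemma). The key steps — Parseval forcing $|\hat{f}_\pm(T)| = 2^{-n/2}$ for bent $f$, the collapse of $\frac{1}{|V|}\sum_{v\in V}(-1)^{T\cdot v}$ to the indicator of $T\in\orth{V}$, and the triangle-inequality bound $|\orth{V}|\cdot 2^{-n/2} = 2^{n/2-k}$ — are exactly the canonical route, and the disperser consequence at $c=1$ follows as you say.
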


We apply this result to prove that the following function is an affine extractor.
\begin{lemma}\label{lem:trace_affine_extractor}
  Let $a_0, a_1, a_2, a_3 \in \bbF_{2^k}$ with $a_0 \ne 0$. Let $g\colon \Q^k \times \Q^k \to \Q$ be the function defined as
  \[
    g(x, y) = \Tr(a_0 \cdot \phi(x) \cdot \phi(y) + a_1 \cdot \phi(x) + a_2 \cdot \phi(y) + a_3).
  \]
  Then $g$ is an affine extractor for dimension $k+c$ with bias at most $2^{-c}$.
  In particular, $g$ is an affine disperser for dimension $k+1$.
\end{lemma}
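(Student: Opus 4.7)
The strategy is to reduce to \cref{lm:bent} by showing that $g$ is a bent function on $2k$ variables; once this is established, the conclusion follows immediately since an affine extractor for dimension $n/2 + c = k + c$ with bias $2^{-c}$ is exactly what the lemma delivers for $n = 2k$.

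To verify bentness, I would compute an arbitrary Fourier coefficient of $g_\pm(x,y) = (-1)^{g(x,y)}$ and show it has absolute value $2^{-k}$. Fix $(\alpha, \beta) \in \Q^k \times \Q^k$. By \cref{prop:trace_linear_boolean}, there exist $\mu_\alpha, \mu_\beta \in \bbF_{2^k}$ such that $\alpha \cdot x = \Tr(\mu_\alpha \iso{x})$ and $\beta \cdot y = \Tr(\mu_\beta \iso{y})$. Using linearity of the trace, the character sum becomes
\[
  \hat{g}_\pm(\alpha,\beta) = 2^{-2k}\sum_{x,y} (-1)^{\Tr\bigl(a_0 \iso{x}\iso{y} + (a_1 + \mu_\alpha)\iso{x} + (a_2+\mu_\beta)\iso{y} + a_3\bigr)}.
\]
Now I would complete the square in $\bbF_{2^k}$: set $\iso{x}' = \iso{x} + (a_2 + \mu_\beta)/a_0$ and $\iso{y}' = \iso{y} + (a_1 + \mu_\alpha)/a_0$ (legal since $a_0 \neq 0$), so that
\[
  a_0 \iso{x}\iso{y} + (a_1+\mu_\alpha)\iso{x} + (a_2+\mu_\beta)\iso{y} = a_0 \iso{x}'\iso{y}' + \tfrac{(a_1+\mu_\alpha)(a_2+\mu_\beta)}{a_0}.
\]
Since $\phi$ is a bijection, the shift is just a reindexing of the sum, which reduces the Fourier coefficient to a constant times $\sum_{x,y}(-1)^{\Tr(a_0 \iso{x}\iso{y})}$.

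The final step is to evaluate the latter sum. Splitting the sum over $y$ first: if $\iso{y} = 0$ the inner sum over $x$ contributes $2^k$; if $\iso{y} \neq 0$ then $a_0 \iso{y}$ is nonzero, so $x \mapsto \Tr(a_0 \iso{x}\iso{y})$ is a nontrivial $\bbF_2$-linear function of $x$ (again by \cref{prop:trace_linear_boolean}), hence sums to $0$. Thus $\sum_{x,y}(-1)^{\Tr(a_0\iso{x}\iso{y})} = 2^k$, giving $|\hat{g}_\pm(\alpha,\beta)| = 2^{-2k}\cdot 2^k = 2^{-k}$ uniformly in $(\alpha,\beta)$. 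So $g$ is bent, and \cref{lm:bent} with $n = 2k$ yields the claimed bound on the bias, and in particular that $g$ is an affine disperser for dimension $k+1$. The main bookkeeping obstacle is ensuring that the linear and constant contributions truly collapse into a character-independent magnitude after completing the square; since the completion only introduces an additive constant inside the trace and a shift of summation variables, the absolute value is unaffected.
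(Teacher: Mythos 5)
Your proof is correct and takes essentially the same route as the paper: both establish that $g$ is bent by computing the Fourier coefficients via a change of variables that completes the square on $a_0\iso{x}\iso{y} + (a_1+\mu_\alpha)\iso{x} + (a_2+\mu_\beta)\iso{y}$, then invoke \cref{lm:bent}. The only cosmetic difference is that you finish by explicitly evaluating $\sum_{x,y}(-1)^{\Tr(a_0\iso{x}\iso{y})} = 2^k$, whereas the paper stops after observing that every coefficient equals $\pm\widehat{g_\pm}(0)$, which already suffices for bentness.
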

\begin{proof}
  Let $g_\pm$ be the $\pm 1$ representation of $g$. By Lemma \ref{lm:bent}, it is enough to prove that all Fourier coefficients of $g_\pm$ have the same absolute value. Recall that given $\alpha \in \Q^{2k}$ the $\alpha$-character $\chi_\alpha$ is defined as $\chi_\alpha(x, y) = {(-1)}^{\alpha \cdot (x, y)}$, where $\alpha \cdot (x, y)$ is the inner product. Fourier coefficient $\widehat{g_\pm}(\alpha)$ can be computed as follows.
  \[
    \widehat{g_\pm}(\alpha) = \sum_{x,y \in \Q^k} g_\pm(x, y) \cdot \chi_\alpha(x, y) = \sum_{x,y \in \Q^k} {(-1)}^{\Tr(a_0 \cdot \iso{x} \cdot \iso{y} + a_1 \cdot \iso{x} + a_2 \cdot \iso{y} + a_3)} \cdot \chi_\alpha(x, y).
  \]

    Split $\alpha$ into two equal parts: $\alpha = (\alpha_1, \alpha_2)$. Then $\alpha \cdot (x, y) = \alpha_1 \cdot x + \alpha_2 \cdot y$.
  By \cref{prop:trace_linear_boolean}, there exist $\mu_1, \mu_2 \in \bbF_{2^k}$ such that $\alpha_1 \cdot x = \Tr(\mu_1 \cdot \iso{x})$ and $\alpha_2 \cdot y = \Tr(\mu_2 \cdot \iso{y})$. Also define
  \begin{align*}
    b_1 &\coloneqq a_0^{-1} \cdot (a_1 + \mu_1), \\
    b_2 &\coloneqq a_0^{-1} \cdot (a_2 + \mu_2), \\
    b_3 &\coloneqq a_3 + a_0^{-1} \cdot (a_1 + \mu_1) \cdot (a_2 + \mu_2).
  \end{align*}

  Then we can express $\widehat{g_\pm}(\alpha)$ in terms of $b_i$:
  \begin{align*}
    \widehat{g_\pm}(\alpha) &=
    \sum\limits_{x,y \in \Q^k} {(-1)}^{\Tr(a_0 \cdot \iso{x} \cdot \iso{y} + a_1 \cdot \iso{x} + a_2 \cdot \iso{y} + a_3)} \cdot {(-1)}^{\Tr(\mu_1 \cdot \iso{x}) + \Tr(\mu_2 \cdot \iso{y})} \\ &=
    \sum\limits_{x,y \in \Q^k} {(-1)}^{\Tr(a_0 \cdot \iso{x} \cdot \iso{y} + a_1 \cdot \iso{x} + a_2 \cdot \iso{y} + a_3 + \mu_1 \cdot \iso{x} + \mu_2 \cdot \iso{y})} \\ &=
    \sum\limits_{x,y \in \Q^k} {(-1)}^{\Tr(a_0 \cdot (\iso{x} + b_2) \cdot (\iso{y} + b_1) + b_3)} \\ &=
    {(-1)}^{\Tr(b_3)} \cdot \sum\limits_{x,y \in \Q^k} {(-1)}^{\Tr(a_0 \cdot (\iso{x} + b_2) \cdot (\iso{y} + b_1))}.
  \end{align*}

  Since $x$ and $y$ iterate through all vectors from $\Q^k$, $a_0 \cdot (\iso{x} + b_2)$ and $\iso{y} + b_1$ take all possible values from $\bbF_{2^k}$. It follows that
  \[
    \widehat{g_\pm}(\alpha) = {(-1)}^{\Tr(b_3)} \widehat{g_\pm}(0).
  \]
\end{proof}

We are now ready to present our directional affine extractor.

\begin{theorem}
\label{thm:construction}
  Let $f\colon \Q^k \times \Q^k \times \Q^k \to \Q$ be the function defined by
  \[
    f(x, y, z) = \Tr(\phi(x) \cdot \phi(y) \cdot \phi(z)).
  \]
  Then $f$ is a directional affine extractor for dimension $2k+c$ with bias $\epsilon \le 2^{-c}$. In particular, $f$ is a directional affine disperser for dimension $2k+1$.
\end{theorem}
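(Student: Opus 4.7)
The strategy is to reduce to Lemma~\ref{lem:trace_affine_extractor} by fibering over the third block of coordinates. Since $f(x,y,z)=\Tr(\iso{x}\iso{y}\iso{z})$ is symmetric in its three $\Q^k$-blocks, we may permute coordinates so that the direction $a=(a^{(1)},a^{(2)},a^{(3)})$ has $a^{(3)}\neq 0$. Expanding $f(x+a^{(1)},y+a^{(2)},z+a^{(3)})+f(x,y,z)$ and collecting by $(\iso{x},\iso{y})$ gives
\[
D_a f(x,y,z)=\Tr\bigl(\iso{a^{(3)}}\iso{x}\iso{y} + \iso{a^{(2)}}(\iso{z}+\iso{a^{(3)}})\iso{x} + \iso{a^{(1)}}(\iso{z}+\iso{a^{(3)}})\iso{y} + \iso{a^{(1)}}\iso{a^{(2)}}(\iso{z}+\iso{a^{(3)}})\bigr).
\]
Crucially, for each fixed $z\in\Q^k$ this is precisely the form of $g$ in Lemma~\ref{lem:trace_affine_extractor} with nonzero leading coefficient $a_0=\iso{a^{(3)}}$; hence the slice $(x,y)\mapsto D_a f(x,y,z)$ is an affine extractor on $\Q^{2k}$ for dimension $k+c$ with bias at most $2^{-c}$.

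Now fix an affine subspace $S\subseteq\Q^{3k}$ of dimension at least $2k+c$, let $T=\pi_z(S)\subseteq\Q^k$ denote its projection onto the third block, and set $d=\dim T\le k$. Each nonempty fiber $S_z=\{(x,y):(x,y,z)\in S\}$ is an affine subspace of $\Q^{2k}$ of dimension exactly $\dim S-d\ge 2k+c-d\ge k+c$, and all fibers have the same size. Consequently the uniform distribution on $S$ factors as ``sample $z\in T$ uniformly, then $(x,y)\in S_z$ uniformly'', which yields
\[
\E_{(x,y,z)\in S}\bigl[(-1)^{D_a f(x,y,z)}\bigr]=\E_{z\in T}\Bigl[\E_{(x,y)\in S_z}\bigl[(-1)^{D_a f(x,y,z)}\bigr]\Bigr].
\]
Since $\dim S_z\ge k+c$, the inner expectation is bounded in absolute value by $2^{-c}$ by the previous paragraph, and the triangle inequality gives $\bias(\restr{D_a f}{S})\le 2^{-c}$, establishing the affine extractor claim. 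Setting $c=1$ yields the disperser part, since bias $\le 1/2<1$ forces $D_a f$ to be nonconstant on $S$.

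The only non-trivial observation is the fibering step: verifying that every nonempty fiber $S_z$ is affine of the same dimension and that this dimension is uniformly at least $k+c$. The first follows from a standard fact about surjective affine maps (their fibers are cosets of a common kernel), and the second from $d\le\dim\Q^k=k$ together with $\dim S\ge 2k+c$. Once the slicewise reduction to Lemma~\ref{lem:trace_affine_extractor} has been spotted, no additional Fourier computation beyond what that lemma already performs is needed.
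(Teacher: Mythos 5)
Your proof is correct and follows essentially the same route as the paper: by symmetry assume $a^{(3)}\neq 0$, observe that for each fixed $z$ the slice $(x,y)\mapsto D_af(x,y,z)$ has exactly the form required by Lemma~\ref{lem:trace_affine_extractor} with leading coefficient $\iso{a^{(3)}}\neq 0$, and then average the per-slice bias bound over the fibers $S_z$. The only cosmetic difference is that you spell out the fibering step more carefully than the paper does (showing that all nonempty fibers $S_z$ have the common dimension $\dim S-\dim\pi_z(S)\geq k+c$), whereas the paper simply asserts this and writes the averaging as a double sum rather than as a nested expectation; the two formulations are equivalent.
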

\begin{proof}
  Consider the directional derivative of $f$ in the non-zero direction $a = (a_1,a_2,a_3)$:
  \begin{align*}
    D_{a} f(x, y, z) &=
    f(x+a_1, y+a_2, z+a_3) + f(x, y, z) \\ &=
    \Tr(\phi(x+a_1) \cdot \phi(y+a_2) \cdot \phi(z+a_3)) + \Tr(\iso{x} \cdot \iso{y} \cdot \iso{z}).
  \end{align*}

  By linearity of $\Tr$ and $\phi$ we have
  \begin{align*}
    D_{a} f(x, y, z) &=
    \Tr((\iso{x}+\iso{a_1}) \cdot (\iso{y} + \iso{a_2}) \cdot (\iso{z} + \iso{a_3}) + \iso{x} \cdot \iso{y} \cdot \iso{z}) \\ &=
    \Tr(\iso{a_1} \cdot \iso{y} \cdot \iso{z} + \iso{a_2} \cdot \iso{x} \cdot \iso{z} + \iso{a_3} \cdot \iso{x} \cdot \iso{y} + \ell(\iso{x},\iso{y},\iso{z})),
  \end{align*}
  where $\ell$ is an affine function.

  Without loss of generality we may assume that $a_3 \neq 0$.
  Let $S$ be an affine subspace with dimension at least $2k+c$. We need show that the bias of $f$ restricted to $S$ is at most $\epsilon$. Given $z_0 \in \Q^k$ define $S_{z_0} \coloneqq \{(x, y) : (x, y, z_0) \in S\}$. For every $z_0$ the affine subspace $S_{z_0}$ is either empty or has dimension at least $k+c$.
  Consider the restriction of $D_a f$ to $z = z_0$.
  \[
    h_{z_0}(x, y) \coloneqq D_a f(x, y, z_0) = \Tr(\iso{a_3} \cdot \iso{x} \cdot \iso{y} + \ell'_{z_0}(\iso{x}, \iso{y})),
  \]
  where $\ell'_{z_0}$ is an affine function. By \cref{lem:trace_affine_extractor}, $h_{z_0}$ is an affine extractor for dimension $k+c$ with bias $\epsilon \le 2^{-c}$. In particular, if $S_{z_0}$ is non-empty, then $\bias(\restr{h_{z_0}}{S_{z_0}}) \le \epsilon$.

  Thus, the bias of $D_a f$ restricted to $S$ can easily be bounded as follows:
  \begin{align*}
    \bias(\restr{D_a f}{S}) &=
    \left|\frac{1}{\card{S}} \sum_{(x, y, z) \in S} {(-1)}^{D_a f(x, y, z)}\right| \\ &=
    \left|\frac{1}{\card{S}} \sum_{z_0 \in \Q^n} \sum_{(x, y, z_0) \in S} {(-1)}^{D_a f(x, y, z_0)}\right| \\ &=
    \left|\frac{1}{\card{S}} \sum_{z_0 \in \Q^n} \sum_{(x, y) \in S_{z_0}} {(-1)}^{h_{z_0}(x, y)}\right| \\ &\le
    \frac{1}{\card{S}} \sum_{z_0 \in \Q^n} \left|\sum_{(x, y) \in S_{z_0}} {(-1)}^{h_{z_0}(x, y)}\right| \\ &\le
    \frac{1}{\card{S}} \sum_{z_0 \in \Q^n} \epsilon \cdot \card{S_{z_0}} = \epsilon.
  \end{align*}
\end{proof}

\section{Average-case lower bound}

We consider a canonical form of \sreadonce{} linear branching programs. We adopt the terminology of~\cite{DBLP:journals/cc/ChenKKSZ15} and say that a \readonce{} linear branching program is \emph{full} if for every inner node $v$ of the program, all the paths leading to $v$ have the same query space.

A \emph{multipath} $(w_1, \ldots, w_m, v)$ is a linear branching program of the form
\begin{center}
  \begin{tikzpicture}
    \tikzset{pnode/.style={draw,circle,outer sep=1mm,inner sep=1mm}}
    \tikzset{pedge/.style={thick,->}}
    \node[pnode] (u1) at (0, 0) {$w_1$};
    \node[pnode] (u2) at (2, 0) {$w_2$};
    \node (ellipsis) at (4, 0) {$\cdots$};
    \node[pnode] (um) at (6, 0) {$w_m$};
    \node[pnode] (v) at (8, 0) {$v$};
    \draw[pedge] (u1) to [out=40,in=140] (u2);
    \draw[pedge] (u1) to [out=-40,in=-140] (u2);
    \draw[pedge] (u2) to [out=40,in=140] (ellipsis);
    \draw[pedge] (u2) to [out=-40,in=-140] (ellipsis);
    \draw[pedge] (ellipsis) to [out=40,in=140] (um);
    \draw[pedge] (ellipsis) to [out=-40,in=-140] (um);
    \draw[pedge] (um) to [out=40,in=140] (v);
    \draw[pedge] (um) to [out=-40,in=-140] (v);
  \end{tikzpicture}
\end{center}

That is, the program ignores the answers to the queries at $w_i$ for every $i$. Given a program $\calP$, we say that a subset of nodes is an \emph{antichain} if none of its nodes is a descendant of another. For example, the set of nodes at a fixed depth and the set of leaves form an antichain. The following lemma and its proof are easy extensions of Lemma 3.7 in \cite{DBLP:journals/cc/ChenKKSZ15}.

\begin{lemma}\label{lem:full_bp}
  Every \wreadonce{} or \sreadonce{} linear branching program $\calP$ of size $s$ in $n$ variables has an equivalent full \wreadonce{} or \sreadonce{} linear branching program $\calP'$, respectively, of size at most $3n \cdot s$. Furthermore, the size of every antichain in $\calP'$ is at most $2s$.
\end{lemma}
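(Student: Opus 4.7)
My plan is to adapt the transformation of Lemma 3.7 in \cite{DBLP:journals/cc/ChenKKSZ15} to the linear setting by inserting padding multipaths along every edge, so that all paths to a given node span the same subspace of queries. Processing the nodes of $\calP$ in topological order, I would inductively define a target query space $Q_v^*$ at each node $v$ by setting $Q_s^* = \{0\}$ at the source $s$ and $Q_v^* = \Span \bigcup_{(u,v)} (Q_u^* + \Span(q_u))$ otherwise, the union ranging over incoming edges. A short induction identifies $Q_v^*$ with $\pre_v^\calP$.

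To build $\calP'$ I would keep each original node of $\calP$ and, on every edge $(u,v)$, insert a multipath $(w_1,\ldots,w_t,v)$ whose queries $q_1,\ldots,q_t$ extend a basis of $Q_u^* + \Span(q_u)$ to a basis of $Q_v^*$; multipath nodes ignore their answers so $\calP'$ computes the same function as $\calP$, and by construction every source-to-$v$ path in $\calP'$ has query space exactly $Q_v^*$, giving fullness. Since $t \le \dim Q_v^* \le n$ and $\calP$ has at most $2s$ edges, $|\calP'| \le s + 2ns \le 3ns$. In the \wreadonce{} case, each new query is automatically independent of the space already queried on its path, so the property is preserved. In the \sreadonce{} case I would choose the basis representatives on each edge $(u,v)$ to lie in $\post_u^\calP$ whenever possible, and then verify $\pre_w^{\calP'} \cap \post_w^{\calP'} = \{0\}$ node by node: for an original $v$ one has $\pre_v^{\calP'} = \pre_v^\calP$, and every multipath query added strictly below $v$ sits on an edge $(u',v')$ where $u'$ is $v$ or a descendant of $v$, hence is chosen outside $\pre_{u'}^\calP \supseteq \pre_v^\calP$; combined with $\pre_v^\calP \cap \post_v^\calP = \{0\}$ in $\calP$ and a careful choice of representatives, this keeps $\post_v^{\calP'}$ disjoint from $\pre_v^{\calP'}$.

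For the antichain bound, I would send each non-source node $w$ of $\calP'$ to a single edge of $\calP$: a multipath node on $(u,v)$ to that edge, and an original node $v \ne s$ to any fixed incoming edge of $v$ in $\calP$. Two distinct elements of an antichain $A$ assigned to the same edge $(u,v)$ would both sit on the linearly ordered path $u \to w_1 \to \cdots \to w_t \to v$ in $\calP'$ and hence be comparable, contradicting the antichain property; since $|E(\calP)| \le 2s$ and the case $s \in A$ collapses $A$ to $\{s\}$, this gives $|A| \le 2s$. The main obstacle I foresee is the \sreadonce{} verification: one must rule out basis representatives chosen independently on different edges combining, via $\bbF_2$-linear combinations across several edges, to reintroduce a form already in some $\pre_v^\calP$; handling this cleanly requires using the DAG inclusions $\pre_v^\calP \subseteq \pre_u^\calP$ for $v$ an ancestor of $u$ in a globally coherent manner.
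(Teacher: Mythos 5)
Your proposal follows essentially the same approach as the paper: process nodes in topological order, and on each edge $(u,v)$ splice in a multipath whose queries extend the query space of the incoming path to $\pre_v$. The size bound ($s + 2s\cdot n \le 3ns$) is identical. Your antichain argument is a minor variation: you map antichain nodes injectively to the (at most $2s$) \emph{edges} of $\calP$, whereas the paper maps them to \emph{nodes} of $\calP$ and uses that out-degree $2$ forces at most two nodes of the antichain onto each original node; both yield $|A| \le 2s$ and both are correct.

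The one point you flag as an ``obstacle'' — verifying that the transformation preserves the \sreadonce{} property — is a genuine subtlety, and here your proposal is actually \emph{more} careful than the paper, which does not address it at all. The concern is real: the complement spaces $Q_i$ have to be chosen coherently. If they are chosen independently per edge, the construction can break \sreadonce{}ness. For instance, take a node $w$ with $\pre_w \ne \{0\}$, two sibling edges $(w,u_1)$ and $(w,u_2)$, and pick on these two edges multipath queries $q_1, q_2$ with $q_1 + q_2 \in \pre_w \setminus \{0\}$ while each $q_i$ individually lies outside $\pre_w$; then $\pre_w \cap \post_w^{\calP'} \ne \{0\}$ even though each individual new query is fine. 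So the ``careful choice of representatives'' you allude to is not optional — it is exactly where the work of the \sreadonce{} case lies, and neither you nor the paper spell it out. Your suggested heuristic (take representatives in $\post_u^\calP$ whenever possible) does not always apply, since $\pre_{v}^\calP$ need not be contained in $\pre_u^\calP + \post_u^\calP$ when $v$ is reachable by paths avoiding $u$, so a complete argument would need a different globally coherent rule. In summary, same construction and same counting as the paper, with a correctly identified but unresolved gap in the \sreadonce{} verification that the paper itself also leaves implicit.
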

\begin{proof}
  We construct $\calP'$ inductively. Consider the nodes of $\calP$ in topological order. It is clear that the start node satisfies the fullness property. Let $v$ be a node of $\calP$ and $p_1, \ldots, p_k$ the paths that meet at $v$, and $V_1 + a_1, \ldots, V_k + a_k$ their canonical affine subspaces. 
  For every $i \in [k]$ choose a set of linearly independent queries $Q_i$ such that $\orth{V_i} + \Span(Q_i) = \pre(v)$.

  For every $i \in [k]$ do the following. Let $Q_i = \{q_1, \ldots, q_m\}$. Replace the edge $u_i \to v$ with a multipath $(w_1, \ldots, w_m, v)$ and an edge $u_i \to w_1$, where $w_i$ are labeled with $q_i$. After this transformation, every path to $v$ will have the query space $\pre(v)$.

  Since a branching program of size $s$ has at most $2s$ edges and we replaced every edge with a multipath of length at most $n$, the size of the constructed full \readonce{} linear branching program $\calP'$ is at most $s+2s \cdot n \le 3n \cdot s$.

  Consider an antichain $A$ in $\calP'$. We map every node in $A$ to nodes in $\calP$. Each node in $A$ is either originally in $\calP$ or it was created by a multipath. In the former case we map it to itself, and in the latter case we map it to the parent node from which it was created. Since the out-degree in $\calP$ is 2 and $A$ is an antichain, at most 2 nodes are mapped to the same node. This proves the result.
\end{proof}

Denote by $\dist(f,g)$ the relative distance between Boolean function $f$ and $g$.

\begin{theorem}\label{thm:avg_case_full}
  Let $f\colon \Q^n \to \Q$ be a directional affine extractor for dimension $d$ with bias $\epsilon < \frac{1}{2}$. Then for every $g\colon \Q^n \to \Q$ computed by a \sreadonce{} linear branching program $\calP$ of size at most $\epsilon \cdot 2^{n-d-1}$, it holds that $\dist(f, g) \ge \frac{1-\sqrt{2\epsilon}}{2}$.
\end{theorem}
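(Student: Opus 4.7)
By \cref{lem:full_bp} I may replace $\calP$ by a full strongly read-once program $\calP'$ whose antichains have size at most $2s \le \epsilon\cdot 2^{n-d}$. The desired bound $\dist(f,g) \ge (1-\sqrt{2\epsilon})/2$ is equivalent to $E^2 \le 2\epsilon$, where $E := \E_x[(-1)^{f(x)+g(x)}]$, so I focus on proving this squared-correlation bound.

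A short \emph{bias lemma} is the main recurring tool: for any affine subspace $C$ of dimension at least $d+1$ with underlying vector space $V_C$, the derivative expansion
\[
|\bias(f|_C)|^2 = \tfrac{1}{|V_C|}\sum_{a \in V_C}\E_{x \in C}[(-1)^{D_a f(x)}] \le \tfrac{1}{2^{d+1}} + \epsilon \le 2\epsilon
\]
follows from the directional extractor property (each $a \ne 0$ contributes at most $\epsilon$; the $a=0$ term contributes $1/|V_C|$), and I work in the meaningful regime $\epsilon \ge 2^{-d-1}$. I would pick the antichain $A$ at depth $n-d-1$ (including any shallower sinks), and decompose $E = \sum_q w_q Z_q$ over source-to-$A$ paths $q$ with $w_q = |S_q|/2^n$, where $Z_q = (-1)^{b_q}\bias(f|_{S_q})$ if $q$ ends at a sink, and $Z_q = \E_{x \in S_q}[(-1)^{f+g_v}]$ if it ends at an internal node $v$ with sub-BP function $g_v$. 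Cauchy--Schwarz gives $E^2 \le \sum_q w_q Z_q^2$, and the sink terms contribute at most $2\epsilon$ by the bias lemma.

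The core of the proof is to show $Z_q^2 \le 2\epsilon$ for every $q$ ending at an internal antichain node $v$, i.e., on every coset $S_q$ of dimension exactly $d+1$. I would expand
\[
Z_q^2 = \tfrac{1}{|V_v|}\Bigl(1 + \sum_{a \in V_v\setminus\{0\}}\E_{x\in S_q}\bigl[(-1)^{D_a f(x) + D_a g_v(x)}\bigr]\Bigr),
\]
aiming to bound each derivative correlation by $\epsilon$ in absolute value. The strong read-once property is essential: the sub-BP queries lie in $\post_v$ and are linearly independent of $\pre_v$, so they restrict to non-trivial linear forms on $V_v$, and $D_a g_v(x)$ depends on $x$ only through the values of these restricted forms. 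Fourier-expanding $(-1)^{D_a g_v}$ in characters $\chi_\ell$ coming from these forms reduces each correlation to a sum of biases of the form $\bias((D_a f + \ell)|_{S_q})$; each such bias is at most $\epsilon$ by splitting $S_q$ into two dim-$d$ cosets via $\ell$ and applying the directional extractor on each piece.

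The main obstacle is controlling the Fourier $L^1$ mass of $D_a g_v$, which naively scales with $2^{\dim\post_v}$ and would spoil the per-$a$ bound. I expect to overcome this by exploiting a low-rank structure: as $a$ varies over $V_v$, $D_a g_v$ is governed by the sub-BP query values evaluated at $a$, which should concentrate the Fourier weight on only a bounded number of effective characters regardless of the total sub-BP depth. Establishing this, together with the per-$a$ bound of $\epsilon$, yields $Z_q^2 \le 2^{-d-1} + \epsilon \le 2\epsilon$ for every internal-terminated $q$, hence $E^2 \le 2\epsilon$ and the desired distance bound.
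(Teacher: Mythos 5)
Your decomposition departs from the paper's in a way that creates a real gap. The paper's proof groups the analysis by \emph{nodes} $v$ of the antichain, not by \emph{paths} $q$ into the antichain. This distinction is not cosmetic: the entire lower bound rests on the fact that all $k$ paths merging at $v$ have canonical cosets $V+a_1,\ldots,V+a_k$ with the \emph{same} $V$ and with $q(a_i)=0$ for every query $q\in\post_v$, so the sub-BP rooted at $v$ returns the identical value on $x+a_1,\ldots,x+a_k$ for each $x\in V$. The directional-extractor property then forces $f(x+a_i)$ and $f(x+a_j)$ to disagree on roughly half of $V$, so the shared sub-BP output must be wrong on many of those inputs. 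That aggregation over the $k$ merged cosets is what converts the size bound $|A|\le 2s\le\epsilon\cdot 2^{n-d}$ (via Jensen/Cauchy--Schwarz on the per-node counts) into a global distance bound.

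Your per-path claim that $Z_q^2\le 2\epsilon$ for \emph{every} $q$ ending at an internal antichain node is simply false. A single coset $S_q$ has dimension $d+1$, $\pre_v\cap\post_v=\{0\}$ implies $\post_v$ restricts to a space of linear forms of dimension up to $d+1$ on $V_q$, and nothing in the strong read-once property stops the sub-BP at $v$ from being a full-depth parity decision tree on $\post_v$ whose output agrees with $f$ on all of $S_q$; then $Z_q=1$. The theorem's size hypothesis does not rule this out for any fixed $q$---it only bounds the total number of nodes---so no per-path argument can succeed. The obstacle you flag at the end (Fourier $L^1$ mass of $D_a g_v$ scaling with $2^{\dim\post_v}$) is a symptom of exactly this, but the proposed fix (``low-rank structure concentrates the Fourier weight on a bounded number of effective characters'') does not hold: $g_v$ restricted to $S_q$ can be essentially arbitrary, with flat Fourier spectrum over $2^{\Theta(d)}$ characters. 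To repair the argument you must correlate across the paths that merge at the same node $v$, which is precisely what the paper's double-counting of the quantity $E=\sum_{x\in V,\,i<j}|f(x+a_i)-f(x+a_j)|$ together with the bound $\bigl|\sum_{x\in V}(-1)^{f(x+a_i)+f(x+a_j)}\bigr|\le\epsilon|V|$ accomplishes. Your sink-path treatment and the auxiliary bias lemma are fine, but they handle only the easy part of the decomposition.
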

\begin{proof}
  Let $s$ denote the size of $\calP$. We first convert $\calP$ into a full program. By \Cref{lem:full_bp}, the size of every antichain is at most $2s$. We then construct an equivalent program $\calP'$ in which every path has length at least $n-d$. We can achieve this by extending every leaf of low depth by a multipath of an appropriate length. 

  Consider the set $A$ of nodes in $\calP'$ at depth exactly $n-d$. Note that every $v \in A$ is either a node at depth $n - d$ in $\calP$, or it is uniquely defined by a leaf of $\calP$ by a multipath. Thus $A$ is identified by an antichain in $\calP$ and thus $\card{A} \le 2s$.

  We call an input $x$ \emph{wrong} if $f(x) \neq g(x)$. The distance $\dist(f, g)$ between $f$ and $g$ is the fraction of wrong inputs.

  \begin{claim}\label{clm:avg_case_1}
    Let $v \in A$ and $k$ the numbers of paths that meet at $v$. Then the number of wrong inputs that pass through $v$ is at least
    \[
      \frac{k\cdot 2^d}{2} \left(1-\sqrt{\epsilon+\frac{1}{k}}\right).
    \]
  \end{claim}
  \begin{claimproof}
    Since the program is full, the corresponding canonical affine subspaces for the paths that meet at $v$ are $V+a_1, \ldots, V+a_k$, for some $d$-dimensional vector space $V$, and distinct $a_1, \ldots, a_k \in \Q^n$.
    Recall that $f$ is a directional affine extractor with bias $\epsilon$. Then for every $i \neq j$, it holds that $D_{a_i + a_j} f = f(x+(a_i + a_j)) + f(x)$ is an affine extractor with bias $\epsilon$, thus
    \begin{equation}\label{eq:extractor_assumption}
      \begin{aligned}
        \left|\sum_{x \in V} {(-1)}^{f(x+a_i)} \cdot {(-1)}^{f(x+a_j)}\right| &=
        \left|\sum_{x \in V + a_j} {(-1)}^{f(x+a_i+a_j) + f(x)}\right| \\ &=
        \bias\mathopen{}\mathclose\bgroup\left(\restr{D_{a_i + a_j} f}{V+a_j}\aftergroup\egroup\right) \cdot \card{V} \le \epsilon \card{V}.
      \end{aligned}
    \end{equation}

    Every $x \in V$ produces a partition of $[k]$ into two parts $(J, [k] \setminus J)$ such that $f(x+a_i) = 0$ for $i \in J$ and $f(x+a_i) = 1$ for $i \not\in J$. Let $m_x$ be the size of the \emph{smallest} part.
    By definition of canonical affine subspace and the choice of $a_i$, for any linear query $q \in \post_v$ we have $q(a_i) = 0$ for all $i \in [k]$. Then $x+a_1, \ldots, x+a_k$ will follow the same path in the subprogram starting at $v$. Hence, for every $x \in V$ it holds that $f(x+a_1) = \cdots = f(x+a_k)$. It implies that at least $m_x$ inputs of the form $x+a_i$ are wrong and the total number of wrong inputs passing through $v$ is at least
    \[
      m \coloneqq \sum_{x \in V} m_x.
    \]
  
    Now consider the following sum
    \[
      E \coloneqq
      \sum_{\substack{x \in V \\ 1\le i<j \le k}} |f(x+a_i) - f(x+a_j)|.
    \]

    We apply double counting to this quantity to obtain the result.
    On the one hand, by definitions of $m_x$ and $m$, we have
    \begin{equation*}
      E = \sum_{x \in V} m_x \cdot (k - m_x) = k m - \sum_{x \in V} m_x^2.
    \end{equation*}
  
    By the Cauchy--Schwarz inequality, $\sum_{x \in V} m_x^2 \ge {\left(\sum_{x \in V} m_x\right)}^2 / \card{V} = m^2/\card{V}$. Thus,
    \begin{equation}\label{eq:bad_inputs_upper_bound}
      E \le km-m^2/\card{V}.
    \end{equation}
  
    On the other hand, $E$ can be rewritten as follows.
    \begin{equation*}
      \begin{aligned}
        E &=
        \sum_{\substack{x \in V \\ 1\le i<j \le k}} \frac{1}{4} {\left({(-1)}^{f(x+a_i)} - {(-1)}^{f(x+a_j)}\right)}^2 \\ &=
        \frac{1}{4} \sum_{1 \le i < j \le k} \left( 2\card{V} - 2\sum_{x\in V} {(-1)}^{f(x+a_i)} \cdot {(-1)}^{f(x+a_j)}\right).
      \end{aligned}
    \end{equation*}

    Applying~\eqref{eq:extractor_assumption}, we obtain the following lower bound on $E$.
    \begin{equation}\label{eq:bad_inputs_lower_bound}
      E \ge \frac{1}{2} \binom{k}{2} \card{V} (1-\epsilon).
    \end{equation}
  
    Combining~\eqref{eq:bad_inputs_upper_bound} and~\eqref{eq:bad_inputs_lower_bound}, we get
    \[
      km-m^2/\card{V} \ge \frac{1}{2} \binom{k}{2} \card{V} (1-\epsilon).
    \]
  
    This can be written as
    \begin{align*}
      {\left(m-\frac{k\card{V}}{2}\right)}^2 &\le
      \frac{1}{4} k^2 \card{V}^2 - \frac{1-\epsilon}{2}\binom{k}{2}\card{V}^2 \\ &=
      \frac{k^2 \card{V}^2}{4} \left(1 - (1-\epsilon)\left(1-\frac{1}{k}\right)\right) \\ &\le
      \frac{k^2 \card{V}^2}{4} \left(\epsilon + \frac{1}{k}\right).
    \end{align*}
  
    Thus,
    \[
      m \ge \frac{k\card{V}}{2}\left(1 - \sqrt{\epsilon + \frac{1}{k}}\right) = \frac{k\cdot 2^d}{2}\left(1 - \sqrt{\epsilon + \frac{1}{k}}\right).
    \]
  \end{claimproof}

  Let $k(v)$ denote the number of paths that meet at $v$ and define $w(k)$ as
  \[
    w(k) \coloneqq \frac{k 2^d}{2}\left(1-\sqrt{\epsilon+\frac{1}{k}}\right).
  \]
  Then by \cref{clm:avg_case_1} the total number of bad inputs that pass through $A$ is at least $\sum_{v\in A} w(k(v)) = \sum_{v\in A} \frac{k(v) 2^d}{2}\left(1-\sqrt{\epsilon+\frac{1}{k(v)}}\right)$.
  Since all paths in $\calP'$ has length at least $n-d$, $\sum_{v \in A} k(v) = 2^{n-d}$.

  The function $w$ is convex, hence by Jensen's inequality, the total number of bad inputs passing through $A$ is at least
  \begin{align*}
    \sum_{v \in A} w(k(v)) \ge
    \card{A} \cdot w\left(\frac{\sum_{v \in A} k(v)}{\card{A}}\right) = \frac{1}{2} 2^n \left(1-\sqrt{\epsilon + \frac{\card{A}}{2^{n-d}}}\right).
  \end{align*}
  
  Since $\card{A} \le 2s \le \epsilon 2^{n-d}$, this expression is at least
  \[
    \frac{1-\sqrt{2\epsilon}}{2} 2^n.
  \]
\end{proof}



Plugging in the function of \Cref{thm:construction} we get the following corollary.

\begin{corollary}
Let $f:\Q^{\frac{n}{3}}\times \Q^{\frac{n}{3}} \times \Q^{\frac{n}{3}} \rightarrow \Q$ be defined by $f(x,y,z) = \Tr(\phi(x) \cdot \phi(y) \cdot \phi(z))$. Then for every $g:\Q^n \rightarrow \Q$ computed by a \sreadonce{} linear BP of size at most $2^{\frac{n}{3} - o(n)}$, $\dist(f, g) \ge \frac{1}{2} - 2^{-o(n)}$.
\end{corollary}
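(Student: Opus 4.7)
The plan is to combine the two main theorems of the previous two sections in a direct way, with a careful choice of the parameter $c$.

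First I would instantiate \Cref{thm:construction} with $k = n/3$. This gives that the function $f(x,y,z) = \Tr(\phi(x)\cdot \phi(y)\cdot \phi(z))$ is a directional affine extractor for dimension $d = 2k + c = 2n/3 + c$ with bias $\epsilon \le 2^{-c}$, for any integer $c \ge 1$ that I am free to choose. The plan is to pick $c$ to be any function of $n$ satisfying $c = \omega(1)$ and $c = o(n)$; for concreteness one can take $c = \lfloor \log n \rfloor$. With this choice $\epsilon = 2^{-c} < 1/2$ so that \Cref{thm:avg_case_full} applies.

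Next I would feed these parameters into \Cref{thm:avg_case_full}. The size threshold it provides is
\[
  \epsilon \cdot 2^{n - d - 1} \;=\; 2^{-c} \cdot 2^{n - 2n/3 - c - 1} \;=\; 2^{n/3 - 2c - 1},
\]
and since $c = o(n)$ this is $2^{n/3 - o(n)}$, matching the size bound claimed by the corollary. The distance lower bound given by the theorem is
\[
  \dist(f,g) \;\ge\; \frac{1 - \sqrt{2\epsilon}}{2} \;=\; \frac{1}{2} - \frac{1}{2}\sqrt{2 \cdot 2^{-c}} \;=\; \frac{1}{2} - 2^{-(c-1)/2 - 1}.
\]
Since $c = \omega(1)$ (and in particular $c = o(n)$), the error term $2^{-(c-1)/2 - 1}$ is $2^{-\omega(1)}$, which we can absorb into the weaker bound $2^{-o(n)}$ stated in the corollary.

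There is essentially no obstacle here; the corollary is a clean plug-in of the construction into the generic lower bound. The only minor thing to watch is that the parameter $c$ appears both in the exponent of the size bound and inside the square root of the distance bound, so one must check that a single choice (any $c$ tending to infinity sufficiently slowly relative to $n$) simultaneously yields size $2^{n/3 - o(n)}$ and distance $1/2 - 2^{-o(n)}$; this is immediate for, e.g., $c = \lfloor \log n \rfloor$.
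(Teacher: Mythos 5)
Your proof is correct and is essentially the (omitted) proof the paper intends: instantiate \Cref{thm:construction} with $k = n/3$, choose $c$ with $\omega(1) \le c \le o(n)$ (e.g., $c = \lfloor\log n\rfloor$), and apply \Cref{thm:avg_case_full}. Your arithmetic for the size threshold $2^{n/3 - 2c - 1}$ and the distance bound $\tfrac12 - 2^{-(c+1)/2}$ is right, and your observation that a single choice of $c$ simultaneously controls both quantities is the only thing that needed checking.

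One small point worth making explicit: \Cref{lm:bent} (and hence \Cref{thm:construction}) requires $c$ to be a positive integer, and \Cref{thm:avg_case_full} needs $\epsilon < \tfrac12$ strictly, so $c \ge 2$ eventually; your choice $c = \lfloor\log n\rfloor$ satisfies both for $n$ large enough, which you implicitly use. Also, if one wanted a sharper statement where the error term tracks the actual BP size $s = 2^{n/3 - t(n)}$ rather than a fixed $o(n)$, one would let $c$ depend on $s$ (roughly $c \approx t/2$), but the existential reading of the corollary's $o(n)$ is what the paper asserts and what you prove.
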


\section{\texorpdfstring{\Wreadonce{} BPs and $\reslin$}{\Wreadonce{} BPs and Res[+]}}

In this section we prove an analogue of the correspondence between read-once BPs and regular resolution for $\reslin$ and \wreadonce{} BPs. The proof is a simple extension of standard arguments.

\begin{theorem}\label{thm:res-bp}\leavevmode
  \begin{enumerate}
      \item Every $\reslin$ refutation of an unsatisfiable CNF $F$ can be translated into a linear BP solving the corresponding search problem without increasing its size.

      \item Every \wreadonce{} BP of size $s$ solving the search problem for CNF $F = C_1 \wedge \ldots \wedge C_m$ in $n$ variables can be translated into a regular $\reslin$ refutation of $F$ of size $O(ns)$.
  \end{enumerate}
\end{theorem}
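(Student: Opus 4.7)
Both parts translate between the DAG of a $\reslin$ refutation and the DAG of a linear BP by reversing edges, but only Part 2 requires real work beyond bookkeeping.

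For Part 1, given a $\reslin$ refutation $\Pi$ of $F$, I would build a linear BP by reversing the edges of $\Pi$. The final empty clause of $\Pi$ becomes the source of the BP and each axiom $C_i$ becomes a sink labeled with $i$. A clause $C$ obtained by the resolution rule on a linear form $f$ from $(f=0)\vee A$ and $(f=1)\vee B$ becomes a query node with query $f$: the outgoing edge for $f(x)=0$ is directed to the node for $(f=1)\vee B$, because when $f(x)=0$ and $x$ falsifies $C$, the equation $(f=1)$ and the clause $B$ are both false, so $(f=1)\vee B$ is falsified, and symmetrically for $f(x)=1$. A clause derived by weakening from $D$ is routed directly to the node for $D$, which is correct since $D\models C$ implies that any $x$ falsifying $C$ also falsifies $D$. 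The invariant ``$x$ reaches a node labeled $C$ only if $x$ falsifies $C$'' holds at the source (the empty clause is falsified by every $x$) and is preserved by both rules, so at each sink labeled $i$ the input $x$ falsifies $C_i$, which solves the search problem. The BP has at most as many nodes as $\Pi$ has clauses.

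For Part 2, I would first apply \cref{lem:full_bp} to the given \wreadonce{} BP $\calP$ to obtain an equivalent full \wreadonce{} BP $\calP'$ of size at most $3ns$. Then, processing the nodes of $\calP'$ in reverse topological order, I would associate to each node $v$ a linear clause $K_v$: at each sink $\ell$ labeled with axiom $C_i$ take $K_\ell := C_i$, and at each inner node $v$ with query $q$ and children $v_0, v_1$ on the $q=0$ and $q=1$ branches, first apply weakening as needed so that $K_{v_0}$ contains $(q=1)$ and $K_{v_1}$ contains $(q=0)$, then apply the $\reslin$ resolution rule on $q$ to obtain $K_v$. Regularity of the resulting refutation --- no linear form resolved twice along any source-to-sink path --- is inherited for free from the \wreadonce{} property, since any such path in the refutation mirrors a root-to-leaf path in $\calP'$ along which the queries are linearly independent. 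The size of the refutation is one clause per node of $\calP'$ plus $O(1)$ weakening steps per node, giving $O(\card{\calP'})=O(ns)$.

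The main obstacle I expect is verifying that $K_{\text{source}}$ is actually the empty clause, i.e., that the inductive construction really produces a refutation and not some larger residual clause. For this step I would lean on the fullness of $\calP'$ --- which ensures that all paths meeting at a node $v$ share the same query space $\pre_v$, so the contributions to $K_v$ from different paths are aligned --- together with the fact that $\calP$ correctly solves the search problem, which constrains each axiom at a leaf to be determined on the affine coset reached by its path. These two features together should let the literals introduced by weakening cancel at the correct ancestor resolutions, so that the clause at the source collapses to $\emptyset$. This is the technical bookkeeping hinted at by the phrase ``simple extension of standard arguments'' in the statement.
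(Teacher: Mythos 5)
Part 1 of your proposal is essentially the paper's argument (reverse the proof DAG, label edges by the complementary query answers, contract weakening edges, prove by induction that the path constraints falsify the current clause), so there is nothing to add there.

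For Part 2, your high-level plan---pass to a full BP via \cref{lem:full_bp}, then inductively assign a linear clause $K_v$ to each node bottom-up, resolving on the node's query---is also the paper's. But the inductive step is exactly where the real work is, and you have left a genuine gap there. The key missing ingredient is a \emph{second} invariant that must be carried along with ``every $x$ reaching $v$ falsifies $K_v$'': namely, that $\neg K_v$, viewed as a linear system, can be written using only forms from $\pre_v$. This is what forces $K_{\mathrm{source}}$ to be empty (at the source $\pre_v = \{0\}$, so $\neg K_{\mathrm{source}}$ is the trivial system and, together with the first invariant, $K_{\mathrm{source}}$ can contain no non-trivial literal). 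To maintain this invariant your weakening step must do more than ``add the literal $(q=1)$'': you need a change-of-basis weakening that rewrites $\neg K_{v_0}$ over a basis of $\pre_{v_0} = \pre_v + \Span(q)$ consisting of $q$ together with a fixed basis of $\pre_v$, so that after resolving on $q$ the surviving literals involve only $\pre_v$-forms; otherwise $q$ may remain hidden inside a literal such as $(q + \beta = 1)$ and resolving on $q$ does not remove it from the span. The second invariant is also what makes the first invariant go through for the resolvent $K_v = A \vee B$: an $x$ reaching $v$ with $q(x)=0$ falsifies $A$ directly, and to see that it also falsifies $B$ one replaces $x$ by a sibling $x'$ in the same coset, agreeing with $x$ on $\pre_v$ but with $q(x') = 1$; since $x'$ reaches $v_1$ and falsifies $B$, and $B$ only mentions $\pre_v$-forms, $x$ falsifies $B$ too. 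Your description of this as literals ``cancelling at the correct ancestor resolutions'' suggests you have not quite isolated this mechanism: nothing cancels, rather the span of $\neg K_v$ is explicitly controlled. (The paper additionally splits into the case where $q$ does not appear in $\neg K_{v_0}$ or $\neg K_{v_1}$ and simply passes the child's clause up without resolving; your ``always weaken then resolve'' variant also works, but only once the change-of-basis and the span invariant are put in place.)
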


\begin{proof}
   \noindent1.
  Consider an application of the resolution rule in the proof DAG $G$. Suppose that it is applied to clauses $C_0 \lor (f = 0)$ and $C_1 \lor (f = 1)$. Then we label the outgoing edges with $f=1$ and $f=0$ respectively.
  We leave the edges corresponding to the weakening rule unlabeled.

  Let $u$ be a vertex in $G$ and $C_u$ the clause it is labeled with. It can be shown by induction on the depth of $u$ that for every path to $u$, the linear system obtained from the equations written on the edges on this path implies $\neg C_u$. The source contains the empty clause, hence the base case holds. For the inductive step, consider any path leading to $u$ and let $v$ be the parent of $u$ on this path. Consider the case when $v$ corresponds to an application of the resolution rule and $w$ be its other child. Let $C_0 \lor (f=b)$, $C_1 \lor (f=b+1)$, and $C_1 \lor C_2$ be the labels of $u$, $w$, and $v$ respectively, where $b \in \{0,1\}$. By the induction hypothesis, every path to $v$ implies $\neg (C_1 \lor C_2)$. In particular, it implies $\neg C_1$. By construction, the edge $(v, u)$ is labeled with $f=b+1$. Then every path to $u$ going through $v$ implies $\neg C_1 \land (f=b+1) = \neg (C_1 \land (f = b))$. Now consider the case when $u$ corresponds to an application of the weakening rule and let $v$ be it parent on this path. Let $C$ and $D$ be the labels of $u$ and $v$. Every path to $v$ implies $\neg D$ by the induction hypothesis and $\neg D \vDash \neg C$. Thus, every path to $u$ through $v$ implies $\neg C$.

  In particular, every path to the sinks of $G$ falsifies some clause of $F$. To obtain the \wreadonce linear BP, we remove labels at the inner nodes and contract all unlabeled edges.

  \medskip
  \noindent2.
  A linear clause $C = \bigvee_{i=1}^k (f_i = a_i)$ can be viewed as a negation of a linear system $\neg C = \bigwedge_{i=1}^k (f_i = a_i + 1)$.
  We first convert $P$ into a full BP of size $O(ns)$ using \cref{lem:full_bp}. Inductively, to every node $v$ we associate a linear clause $C_v$ such that:
  \begin{enumerate}
    \item Every assignment reaching $v$ falsifies $C_v$.
    \item If $\neg C_v$ represents a linear system $Bx = b$, then the row space of $B$ is $\pre(v)$.
  \end{enumerate}
  For the base case, with each leaf $v$ we associate the clause $C_v$ it is labeled with. The first condition holds since $P$ solves the search problem. To see the second property, note that any path reaching $v$ can be expressed as a linear system on a basis for $\pre(v)$ which forces every literal in $C_v$. This implies that single variables in $C_v$ are in $\pre(v)$.

  For the inductive step, consider a node $v$, which queries $q$ with outgoing neighbors $u$ and $w$, in the directions $q=0$ and $q=1$ respectively. Observe that $\neg C_u \not \models q(x) = 1$ and $\neg C_w \not \models q(x) = 0$. Thus, there are only two cases to consider:
  \begin{enumerate}
    \item $\neg C_u \not \models q(x)=0$ or $\neg C_w \not \models q(x)=1$,
    \item $\neg C_u \models q(x)=0$ and $\neg C_w \models q(x)=1$.
  \end{enumerate}
  In the first case, we simply let $C_v$ be $C_u$ or $C_w$, depending on which condition holds. For the second case, let $B = \{\beta_1, \ldots, \beta_t\}$ be a basis of $\pre(v)$. Fullness implies $\pre(u) = \pre(w) = \pre(v) + \Span(q)$. Applying the inductive hypothesis, we can write $\neg C_u = (q(x)=0) \wedge (B_u x = b_u)$ and $\neg C_w = (q(x) = 1) \wedge (B_w x = b_w)$, where $B_u$ and $B_w$ are matrices with rows in $\beta_1, \ldots, \beta_t$ and $b_u$ and $b_w$ are some vectors. To write $C_u$ and $C_w$ in these forms, we might need to change the basis, which we can do by applying the weakening rule. We claim that setting $C_v$ so that $\neg C_v$ can be written as $B_u x= b_u \wedge B_w x = b_w$ satisfies the requirements.

  Consider any path to $v$. Such a path can be described by a system $Rx = b$ where rows in $R$ are from $B$. Since every such path can be extended to both $u$ and $w$, it follows that $B_u x = b_u \vDash Rx=b$ and $B_w x = b_w \vDash Rx = b$. This means that $B_u x = b_u \wedge B_w x = b_w$ is consistent and thus the derivation of $C_v$ from $C_u$ and $C_w$ (possibly changing the basis) is a valid $\reslin$ step. It is easy to see that conditions 1 and 2 hold for $C_v$.

  Since for every $v$ we create at most $2$ extra clauses, the total size of the proof is at most $O(ns)$.

\end{proof} 

\section{Conclusion}

Several problems are immediately suggested by our work:

\begin{itemize}
    \item \emph{Explicit constructions.} Give an explicit construction of directional affine extractors (or dispersers) for smaller dimension $d$, ideally $d = o(n)$.

    \item \emph{BP lower bounds.} Prove worst-case and average-case hardness results for the \wreadonce{} BPs.

    \item \emph{Proof complexity.} Prove a \readonce{} linear BP lower bound for a search problem, that is for some unsatisfiable CNF $F = C_1 \wedge \ldots \wedge C_m$, show that a \readonce{} linear BP with leaves labeled by $C_i$s solving the corresponding search problem has to be large.
\end{itemize}

\bibliographystyle{plainurl}
\bibliography{refs}

\end{document}